\newtheorem{theorem}{Theorem}
\DeclareMathOperator*{\atantwo}{atan2}
\DeclareMathOperator*{\diag}{diag}
\DeclareMathOperator*{\blkdiag}{blkdiag}
\newcommand{\bx}{{\bm{x}}}
\newcommand{\differential}{{\rm{d}}}
\renewcommand{\det}{{\mathrm{det}}}
\title{\LARGE\textbf{Prediction and Optimal Feedback Steering of Probability Density\\Functions for Safe Automated Driving 
}
}
\author{Shadi Haddad, Kenneth F. Caluya, Abhishek Halder, and Baljeet Singh
\thanks{Shadi Haddad, Kenneth F. Caluya, and Abhishek Halder are with the Department of Applied Mathematics, University of California, Santa Cruz, CA 95064, USA,
        {\tt\small{\{shhaddad,kcaluya,ahalder\}@ucsc.edu}}. Baljeet Singh is with the Ford Greenfield Labs, Palo Alto, CA 94304, USA, {\tt\small{BSING124@ford.com}}.%
}}
\begin{document}

\maketitle
\thispagestyle{empty}
\pagestyle{empty}

\def\spacingset#1{\def\baselinestretch{#1}\small\normalsize}
\spacingset{1}

\begin{abstract}
We propose a stochastic prediction-control framework to promote safety in automated driving by directly controlling the joint state probability density functions (PDFs) subject to the vehicle dynamics via trajectory-level state feedback. To illustrate the main ideas, we focus on a multi-lane highway driving scenario although the proposed framework can be adapted to other contexts. The computational pipeline consists of a PDF prediction layer, followed by a PDF control layer. The prediction layer performs moving horizon nonparametric forecasts for the ego and the non-ego vehicles' stochastic states, and thereby derives safe target PDF for the ego. The latter is based on the forecasted collision probabilities, and promotes the probabilistic safety for the ego. The PDF control layer designs a feedback that optimally steers the joint state PDF subject to the controlled ego dynamics while satisfying the endpoint PDF constraints. Our computation for the PDF prediction layer leverages the structure of the controlled Liouville PDE to evolve the joint PDF values, as opposed to empirically approximating the PDFs. Our computation for the PDF control layer leverages the differential flatness structure in vehicle dynamics. We harness recent theoretical and algorithmic advances in optimal mass transport, and the Schr\"{o}dinger bridge. The numerical simulations illustrate the efficacy of the proposed framework. 
\end{abstract}

\section{Introduction}\label{SecIntro}
We propose a two layer framework for the prediction and feedback control of joint state probability density functions to promote stochastic safety in multi-lane highway driving scenarios such as Fig. \ref{fig:Schematic}. While recent works \cite{carvalho2015automated,funfgeld2017stochastic} have advocated the use of stochastic forecasts for safety considerations in automated driving, typical Monte Carlo-based predictive algorithms incur high computational cost due to the curse of dimensionality. Concomitantly, the role of feedback control in automated driving has been limited to mitigating, rather than active steering of uncertainties.

Building on \cite{haddad2020density}, our \emph{first contribution} is to show that a direct solution of the characteristic ODEs associated with certain Liouville PDEs, allow propagation of the joint state PDFs of the ego and the non-ego vehicles in the form of probability-weighted scattered point clouds. Such a computation does not require approximating the nonlinearities of the vehicle dynamics, or the time varying statistics.

Our \emph{second contribution} is to directly design feedback controllers for regulating the ego vehicle's stochastic states from a given initial joint PDF to a desired terminal joint PDF in finite horizon while minimizing the control effort. Based on the collision probability computation, we infer whether a lane change could be safer for the ego, and if so, we derive the desired terminal PDF as the ``safest" Wasserstein barycenter of the two consecutive non-ego vehicles in another lane. We exploit the differential flatness structure of the underlying dynamics, as well as recent theoretical advances \cite{chen2016optimal,chen2016entropic,caluya2020finite} in density control. To speed up the controller synthesis computation, we derive closed form formula for the inverse and determinant of the finite horizon controllability Gramian associated with the Brunovsky normal form--these results should be of independent interest.

\begin{figure}
\centering
\begin{subfigure}[b]{\linewidth}
   \includegraphics[width=\linewidth]{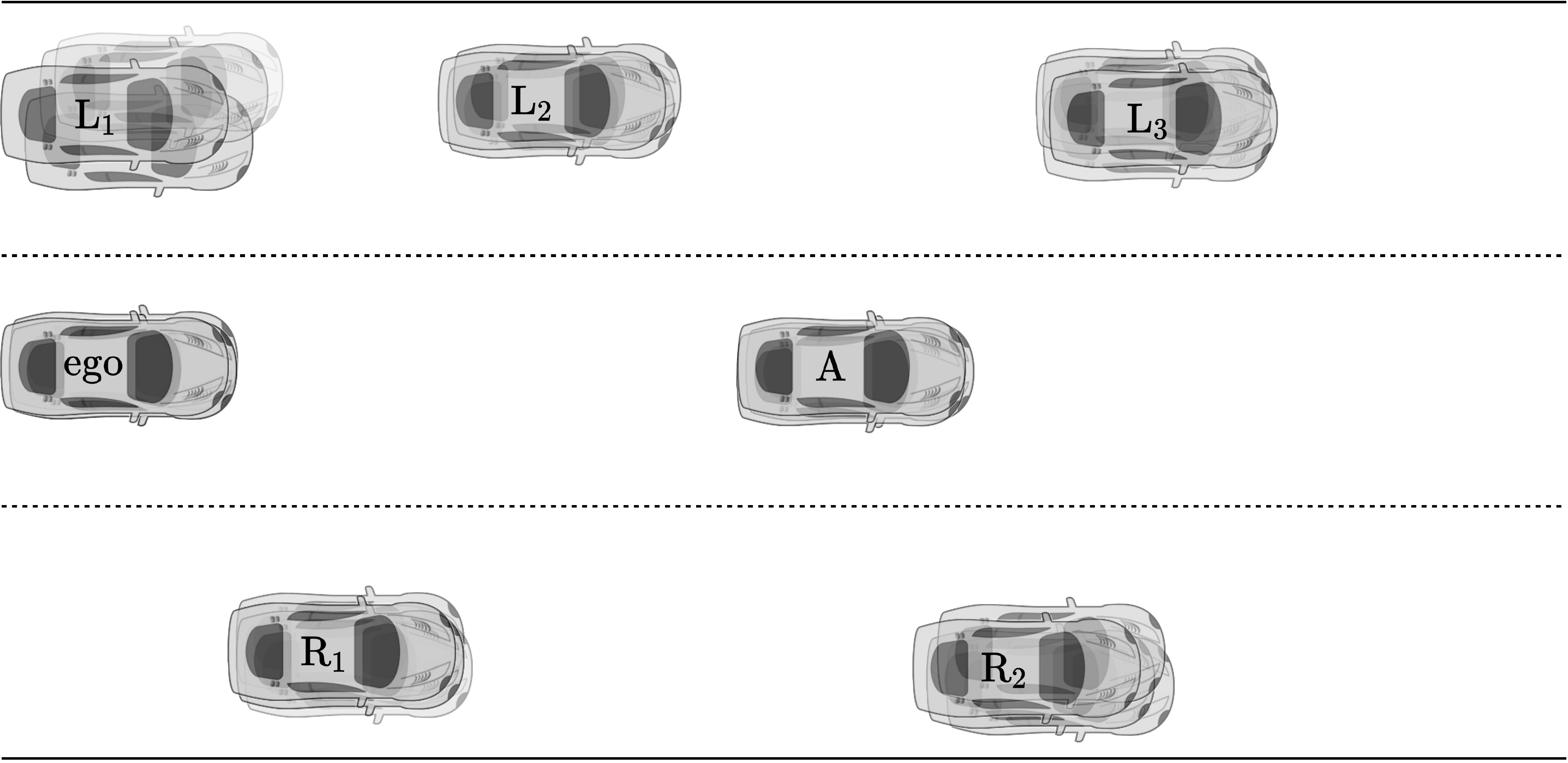}
   \caption{{\small{The ego vehicle's estimate at time $t=t_{0}$.}}}
   \label{fig:InitialTimeSchematic} 
\end{subfigure}
\par\bigskip
\begin{subfigure}[b]{\linewidth}
   \includegraphics[width=\linewidth]{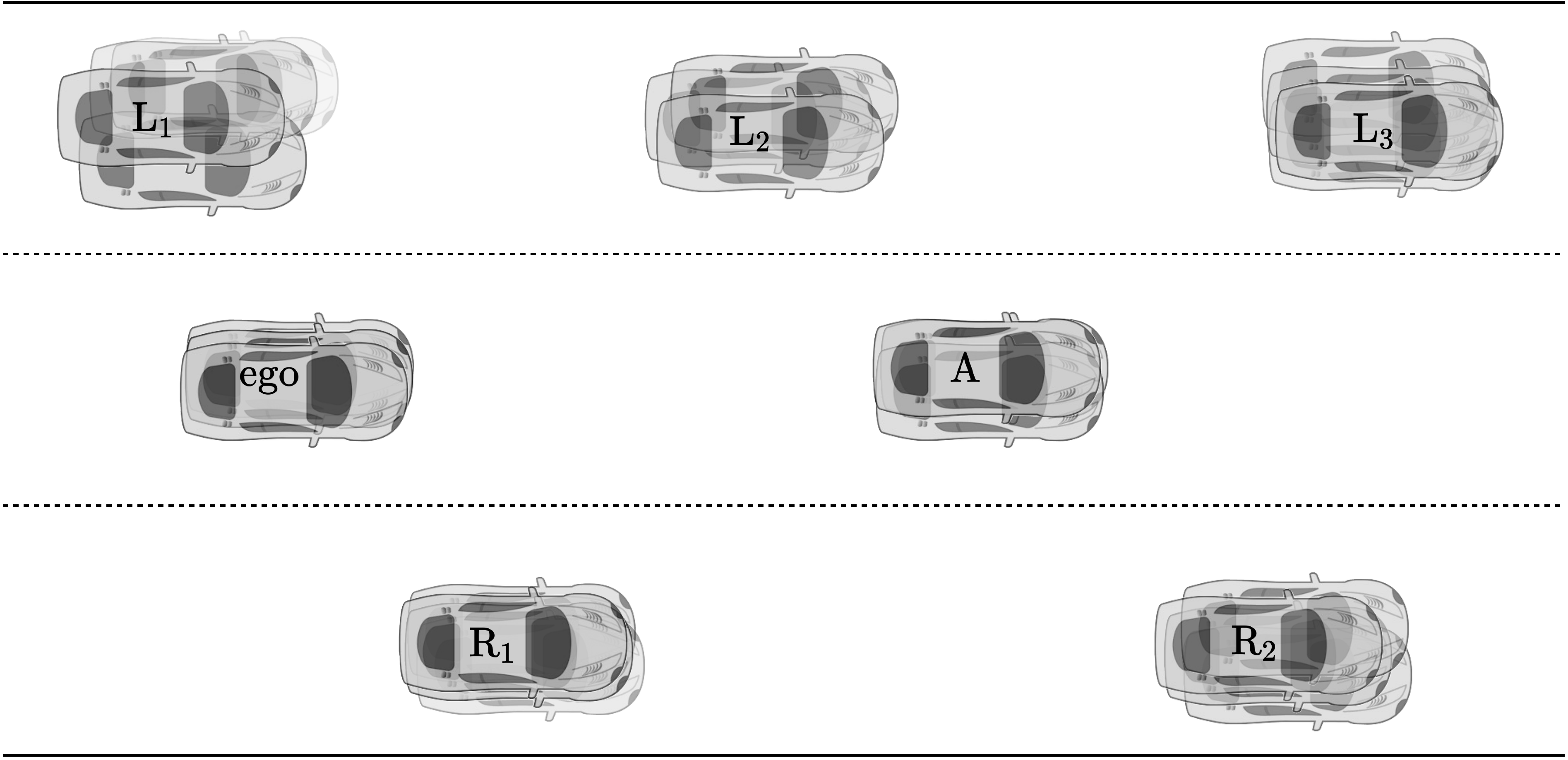}
   \caption{{\small{The ego vehicle's prediction for $t=t_{0}+T$ made at time $t=t_{0}$.}}}
   \label{fig:FinalTimeSchematic}
\end{subfigure}
\caption{{\small{A schematic of the stochastic states in multi-lane unidirectional highway driving scenario viewed from the ego vehicle's perspective. The dashed lines denote the lane boundaries. The three cars in the left of the ego vehicle's lane are labeled as L\textsubscript{1}, L\textsubscript{2}, L\textsubscript{3}. Likewise, the two cars in the right of the ego vehicle's lane are labeled as R\textsubscript{1}, R\textsubscript{2}. The car ahead of the ego is labeled as A. (a) At $t=t_{0}$, the ego vehicle's \emph{estimates} of the stochastic states or beliefs of all cars (including itself) in its neighborhood. (b) At $t=t_{0}$, the ego vehicle's \emph{predictions} of the stochastic states or beliefs at $t_{0}+T$ of all cars (including itself) in its neighborhood.}}}
\vspace*{-0.25in}
\label{fig:Schematic}
\end{figure}

\begin{figure*}[t]
\centering
   \includegraphics[width=\linewidth]{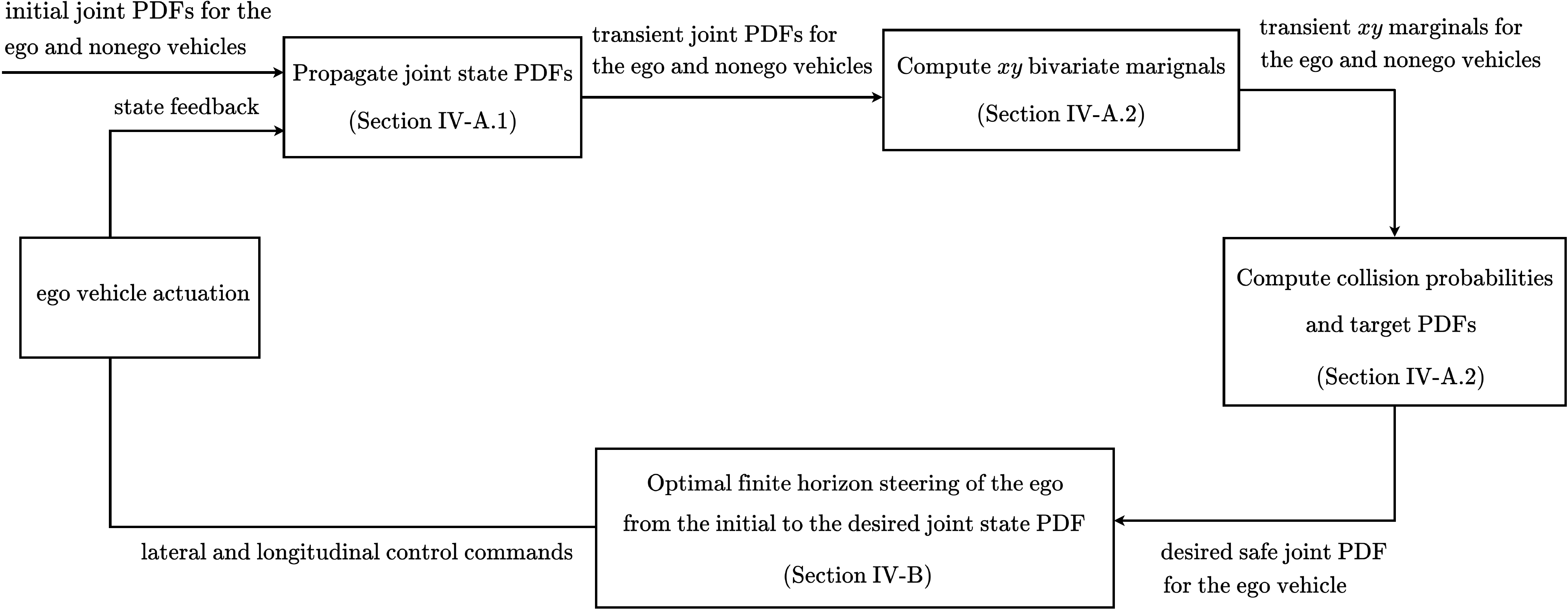}
   \caption{{\small{A block diagram of the proposed PDF prediction-control framework for multi-lane highway driving scenarios such as in Fig. \ref{fig:Schematic}.}}}
   \label{fig:Architecture} 
   \vspace*{-0.15in}
\end{figure*}

\subsubsection*{Notations} Symbols for matrices and vectors are set to be boldfaced capital and small, respectively. Non-boldfaced variables are scalars. We use $\left[\:\cdot\:\right]_{ij}$ to denote the $(i,j)$th element of a matrix. For a block diagonal matrix, we use the symbol $\blkdiag(\cdot)$ whose arguments are the diagonal blocks. For a given vector $\bm{v}$, the symbol $\diag(\bm{v})$ denotes a diagonal matrix whose diagonal comprises of the elements of $\bm{v}$. The notation $\bm{e}_{i}^{d}$ stands for the $i$th standard basis vector in $\mathbb{R}^{d}$ while $\bm{0}$ denotes a column vector of zeros in appropriate dimension. The symbols $\nabla$, ${\rm{Hess}}(\cdot)$, $\det(\cdot)$, $!$, $\oslash$ stand for the Euclidean gradient, Hessian, determinant, factorial, and Hadamard (entry-wise) division, respectively. We use $\Gamma(\cdot)$ to denote the Gamma function. For any positive integer $n$, we have $\Gamma(n) = (n-1)!$. Furthermore, the symbol $\sharp$ denotes the pushforward of PDF, or probability measure in general. The expectation operator w.r.t. the joint PDF $\rho(\bm{x})$ is denoted as $\mathbb{E}_{\rho}\left[\cdot\right] := \int \left(\cdot\right)\rho\:\differential\bm{x}$. For matrix arguments, the symbol $\langle\cdot,\cdot\rangle$ denotes the Frobenius inner product.

The remaining of this paper is structured as follows. Section \ref{SecArchitecture} provides an overview of the proposed framework. We delineate the vehicle model in Section \ref{SecModels}. The algorithmic details of the proposed two layer prediction-control framework are explained in Section \ref{SecFramework}. The numerical simulations are given in Section \ref{SecNumSim}. Section \ref{SecConclusions} contains the concluding remarks. 


\section{An Overview of the Proposed Framework}\label{SecArchitecture}
This paper is concerned with the following problem: given the dynamic stochastic uncertainties in the vehicles' states in multi-lane highway driving scenarios such as in Fig. \ref{fig:Schematic}, how should an individual vehicle make safe online decisions in real time? In particular, \emph{what} should the vehicle compute, \emph{when}, and \emph{how}? To address the same, we propose a computational framework for the prediction and control of joint PDFs. 

The block diagram in Fig. \ref{fig:Architecture} shows a high level overview of the proposed framework. Specifically, we propose (Sec. \ref{SubsecPredictionLayer}) to predict the joint state PDFs subject to the closed-loop vehicle dynamics by solving the associated Liouville PDE via method-of-characteristics. We suppose that the ego vehicle can estimate its own, as well as other non-ego vehicles' state PDFs at the beginning of the prediction horizon, using standard sensor fusion algorithms such as the particle filter. Since the stochastic prediction is performed in a moving horizon manner, these ``initial" joint PDFs are updated at the beginning of each prediction horizon. The predicted joint PDFs allow computing the transient marginals, which in turn are leveraged (Sec. \ref{subsubsecCollisionProbWassBary}) to compute the time-varying collision probabilities, followed up by computing a desired safe PDF for the ego vehicle at the end of the prediction horizon. We then synthesize a feedback controller (Sec. \ref{SubsecControlLayer}) that steers the ego vehicle's stochastic state from the initial joint to the computed desired joint PDF over the fixed time horizon while minimizing the control effort and respecting the vehicle dynamics. 

We remark here that the optimal feedback synthesis for density steering is a nonstandard stochastic optimal control problem since it calls for solving a two point boundary value problem on the space of probability measures. Our solution approach--stochastic regularization (Sec. \ref{subsubsecStocRegularization}) followed by certain fixed point recursion (Sec. \ref{subsubsecFixedPointRecursion})--harnesses recent theoretical progress \cite{caluya2019wasserstein} for this class of problems as well as the differential flatness structure of the vehicle dynamics. 

Deploying the framework in Fig. \ref{fig:Architecture} requires scalable algorithms for each of the blocks shown. As alluded in Section \ref{SecIntro}, existing literature advocating the use of stochastic forecasts in automated driving rely on Monte Carlo, which are expensive for onboard computation considering the $O(1)$ seconds physical dynamics timescale. Also, increasing the accuracy of the Monte Carlo requires finer discretization of the state space, thereby the computational time scales exponentially with the state dimension. In contrast, the proposed algorithms in Sec. \ref{SecFramework} are gridless in the sense they work with probability weighted point clouds. The probability weights are explicitly computed along the trajectories. For computational speed and accuracy comparisons in prediction, we refer the readers to \cite[Sec. IV-C]{haddad2020density}. For the feedback controller in Sec. \ref{SubsecControlLayer}, notice that the sensing and actuation remain in the signal level, as in standard model predictive control (MPC). However, unlike the standard MPC, our feedback controller guarantees exact stochastic steering in distribution sense. 

Directly formulating the prediction and control problems at the PDF level allows us to manage the nonparametric stochastic uncertainties in a quantitative manner, as opposed to the standard practice of applying the MPC for a nominal system and then verifying the statistical performance in numerical simulation.


\section{Bicycle Model and Differential Flatness}\label{SecModels}
To describe an individual car's trajectory-level, i.e., microscopic motion, we consider the kinematic bicycle model:
\begin{align}
\dot{x} = v\cos\theta,\quad \dot{y} = v\sin\theta,\quad\dot{\theta} = \frac{v}{\ell}\tan\phi,\quad\dot{v} = a,
\label{KinemaicBicycleModelSimplified}	
\end{align}
with state vector $\bx:=(x,y,\theta,v)^{\top}$ comprising of the longitudinal and lateral position coordinates of the center of the rear axle $(x,y)$, the heading angle $\theta$, and the speed $v$. The control vector $\bm{u}:=(a,\phi)^{\top}$ comprises of the acceleration $a$, and the steering wheel angle $\phi$. The parameter $\ell$ denotes the distance between the front and rear axles. The model (\ref{KinemaicBicycleModelSimplified}) assumes small sideslip angle.

It is well-known that the model (\ref{KinemaicBicycleModelSimplified}) is differentially flat \cite{fliess1995flatness,fliess1999lie,van1998differential,murray1995differential}. Specifically, taking the $2\times 1$ vector $\bm{\eta}\equiv(\eta_{1},\eta_{2})^{\top}:=(x,y)^{\top}$ as the flat output, all states and controls can be written in terms of $\bm{\eta}$ and its time derivatives:
\begin{equation}
\begin{aligned}
&x = \eta_{1}, \quad y = \eta_2, \quad \theta = \atantwo(\dot{\eta}_{2},\dot{\eta}_{1}),\\
&v = \!\left(\dot{\eta}_{1}^{2} + \dot{\eta}_{2}^{2}\right)^{\!\frac{1}{2}}, \quad a = \left(\dot{\eta}_{1}\ddot{\eta}_{1} + \dot{\eta}_{2}\ddot{\eta}_{2}\right)\!\!/\!\!\left(\dot{\eta}_{1}^{2} + \dot{\eta}_{2}^{2}\right)^{\!\frac{1}{2}\!},\\
&\phi = \arctan\!\left(\ell(\dot{\eta}_1\ddot{\eta}_2 - \dot{\eta}_2\ddot{\eta}_1)/\!\!\left(\dot{\eta}_{1}^{2} + \dot{\eta}_{2}^{2}\right)^{\!\frac{3}{2}}\right). 
\end{aligned}
\label{EndoTrans}
\end{equation}
Letting $\bm{z}:=(\bm{z}^{(1)},\bm{z}^{(2)})^{\top}$ where the subvector $\bm{z}^{(k)} := (\eta_{k},\dot{\eta}_{k})^{\top}$ for $k\in\{1,2\}$, the endogenous transformation (\ref{EndoTrans}) allows\footnote{Every differentially flat system can be put in
the Brunovsky normal form \cite[Theorem 4.1]{van1998differential}.} rewriting (\ref{KinemaicBicycleModelSimplified}) in the Brunovsky normal form:
\begin{align}
\dot{\bm{z}} = \underbrace{\blkdiag\left(\bm{A}^{(1)},\bm{A}^{(2)}\right)}_{=:\bm{A}}\bm{z} + \underbrace{\blkdiag\left(\bm{e}_{2}^{2},\bm{e}_{2}^{2}\right)}_{=:\bm{B}}\widetilde{\bm{u}}.
\label{BrunovskyForm4State2Control}	
\end{align}
In (\ref{BrunovskyForm4State2Control}), the matrix $\bm{A}^{(k)}:= \left[\bm{0} \mid \bm{e}_{1}^{2}\right]$ for $k\in\{1,2\}$, and 
\begin{subequations}
\begin{align}
\widetilde{u}_1 &:= a\cos\theta - (v^{2}/\ell)\sin\theta\tan\phi,\\
\widetilde{u}_2 &:= a\sin\theta + (v^{2}/\ell)\cos\theta\tan\phi.
\end{align}
\label{NewControlFromOldControl}	
\end{subequations}
We will need the mapping $\bx \mapsto \bm{z}:= \bm{\tau}(\bm{x}) = (x,v\cos\theta,y,v\sin\theta)^{\top}$, and its inverse $\bm{z} \mapsto \bm{x} = \bm{\tau}^{-1}(\bm{z}) = (z_1, z_3, \atantwo(z_4,z_2), \sqrt{z_2^2 + z_4^2})^{\top}$. In particular, the determinant of the Jacobian
\begin{align}
\det\left(\nabla_{\bx}\bm{\tau}\big\vert_{\bx=\bm{\tau}^{-1}(\bm{z})}\right) = \sqrt{z_{2}^{2} + z_{4}^{2}} \neq 0, \;\text{since}\;v\neq 0.
\label{DetJac}	
\end{align}
This will be useful in the sequel.


\section{Two Layer Prediction-Control of PDFs}\label{SecFramework}
We next propose a two layer computational framework for the ego vehicle: PDF prediction followed by PDF control. In practice, a third layer at the lower level may exist which
is responsible for safety checks at a higher
frequency than the prediction and control layers.
Based on safety evaluation at this level, corrective actions
can be applied such as lane change abort or emergency braking.

\subsection{PDF Prediction Layer}\label{SubsecPredictionLayer}
The individual vehicle model (\ref{KinemaicBicycleModelSimplified}) is in standard form $\dot{\bm{x}}=\bm{f}(\bm{x},\bm{u})$. We suppose that each car closes the loop with a nominal MPC policy: $\bm{u}=\bm{\pi}_{\rm{MPC}}(\bm{x},t)$, resulting in the closed-loop dynamics $\dot{\bm{x}}=\bm{f}(\bm{x},\bm{\pi}_{\rm{MPC}}(\bm{x},t))$. These MPC policies in each vehicle can be thought of as low-level controllers for holding lanes and maintaining speeds, and compensate the deviations from its mean state at $t=t_{0}$ subject to physical constraints. In Section \ref{SecNumSim}, we will detail an implementation.

At time $t=t_{0}$, the ego vehicle estimates the joint PDFs for its own as well as its neighboring vehicles' states (e.g., using a filtering algorithm) at $t_{0}$. For instance, in the scenario described in Fig. \ref{fig:Schematic}, these joint PDFs will be $\rho_{0}^{\text{ego}}, \rho_{0}^{\text{A}}, \rho_{0}^{\text{L}_{i}}, \rho_{0}^{\text{R}_{j}}$, where $i\in\{1,2,3\}$ and $j\in\{1,2\}$, each of these PDFs being supported on the respective four dimensional $(x,y,\theta,v)$ state spaces. The ego can use these PDFs to forecast the evolution of the respective joint state PDFs over a fixed horizon $[t_{0},T]$ by solving the Liouville PDE for the closed-loop dynamics:
\begin{align}
\!\!\!\frac{\partial\rho}{\partial t} \! + \!\nabla\!\cdot\!\left(\bm{f}(\bm{x},\bm{\pi}_{\rm{MPC}}(\bm{x},t))\rho\right) = 0, \: \rho\left(\bm{x},t=t_0\right)\;\text{known}.
\label{LiouvillePDE}	
\end{align}
The prediction horizon length $T$ is assumed to be small ($\sim 2$ s). For the PDF prediction purpose, the ego assumes that the neighboring non-ego vehicles will not change lane during $[t_{0},t_{0}+T]$.      

In (\ref{LiouvillePDE}), the ego sets the initial joint PDF $\rho\left(\bm{x},t=t_0\right)$ equal to  $\rho_{0}^{\text{ego}}, \rho_{0}^{\text{A}}, \rho_{0}^{\text{L}_{i}}, \rho_{0}^{\text{R}_{j}}$, for the respective vehicles, and predicts their joint PDFs $\rho^{\text{ego}}(\bx,t), \rho^{\text{A}}(\bx,t), \rho^{\text{L}_{i}}(\bx,t), \rho^{\text{R}_{j}}(\bx,t)$ for $t\in[t_0,T]$. Since (\ref{LiouvillePDE}) is a first-order PDE initial value problem, and the characteristic curves of the Liouville PDE are precisely the trajectories of the closed-loop dynamics \cite[Sec. II]{halder2011dispersion}, \cite[Sec. IV.A]{halder2015optimal}, hence the ego can perform the joint PDF predictions via gridless computation as explained next.   

\subsubsection{Propagation of weighted point clouds}\label{subsubsecPropagationPointCloud}
The ego generates $N$ random samples $\{\bm{x}^{i}_{0}\}_{i=1}^{N}$ from each of the known joint PDFs $\rho(\bx,t=t_0)$, and evaluates them at the respective initial joint PDFs to obtain the \emph{weighted} point clouds $\{\bm{x}^{i}_{0},\rho^{i}_{0}\}_{i=1}^{N}$ for each vehicle. It then evolves the weighted $N$-sample point clouds $\{\bm{x}^{i}(t),\rho^{i}(t)\}_{i=1}^{N}$ along the characteristic curves of (\ref{LiouvillePDE}), for each vehicle (including itself). The state vector for the $i$th sample, $\bm{x}^{i}(t)$, is updated via the respective closed-loop dynamics. The joint PDF along the $i$th sample trajectory is updated via the characteristic ODE 
\[\dot{\rho}^{i} = -\nabla_{\bm{x}^{i}}\cdot\bm{f}(\bm{x}^{i},\bm{\pi}_{\rm{MPC}}(\bm{x}^{i},t)), \quad i=1,\hdots,N,\]
i.e., each vehicle's joint PDF propagation requires integrating $5\times 1$ vector ODE with $N$ initial conditions, wherein each of these $N$ samples can be time-propagated in parallel.

\subsubsection{Collision probabilities and Wasserstein barycenters}\label{subsubsecCollisionProbWassBary}
From the joint PDF trajectories $\{\bm{x}^{i}(t),\rho^{i}(t)\}_{i=1}^{N}$, the ego estimates the corresponding bivariate $(x,y)$ marginal PDF trajectories. For the scenario shown in Fig. \ref{fig:Schematic}, these bivariate $(x,y)$ marginal PDFs will be $\rho^{\text{ego}}_{xy}(t), \rho^{\text{A}}_{xy}(t), \rho^{\text{L}_{i}}_{xy}(t), \rho^{\text{R}_{j}}_{xy}(t)$, where $i\in\{1,2,3\}$, $j\in\{1,2\}$. If we denote the space of bivariate $(x,y)$ marginal PDFs as $\mathcal{P}_{xy}$, then we can define collision probability $p_{\text{collision}}$ between two vehicles at any given time $t\in[t_{0},t_{0}+T]$, as a mapping $p_{\text{collision}} : \mathcal{P}_{xy} \times \mathcal{P}_{xy} \mapsto [0,1]$. We refer the readers to \cite[p. 8]{haddad2020density} for the computation of $p_{\text{collision}}$ at any given time.

To decide whether to change the lane or not over the ensuing time horizon $[t_0,t_0+T]$, the ego can compute and compare the $p_{\text{collision}}$ for different feasible scenarios. For the situation depicted in Fig. \ref{fig:Schematic}, if the ego continues in its own lane then the collision probability of interest\footnote{Recall the assumption that non-ego vehicles will not change lanes in $[t_0,t_0+T]$.} is $p_{\text{collision}}(\rho^{\text{ego}}_{xy}(t), \rho^{\text{A}}_{xy}(t))$. If the ego changes lane, either to its left or right, then by the end of the time horizon, it needs to safely situate itself among the available gaps (i.e., longitudinal separations) between the vehicles in these lanes. We suppose that if any of these gaps (in the expected sense) is $\leq 2\ell$ at $t=t_{0}+T$, then the corresponding placement is unsafe. If there are more than one available gaps of expected length $> 2\ell$, then the ego needs to estimate which of them is probabilistically safest for placing itself. A natural way to estimate the same is to compute the Wasserstein barycenter\footnote{The Wasserstein barycenter between a pair of PDFs $\rho_{1},\rho_{2}$ with finite second moments, is defined as the PDF $\rho^{\text{bary}} := \arg\inf_{\rho}\{\lambda_{1}W^{2}(\rho,\rho_1)+\lambda_{2}W^{2}(\rho,\rho_2)\}$, and can be interpreted as the weighted average of the input PDFs $\rho_1,\rho_2$ with weights $\lambda_{i}\geq 0$, $\sum_{i}\lambda_i = 1$. The $\arg\inf$ is taken over all PDFs with finite second moments, and $W(\cdot,\cdot)$ is the Wasserstein metric \cite[Ch. 7]{villani2003topics} on the same space. In our context, $\lambda_1=\lambda_2=0.5$ to ensure equi-Wassersetin separation from $\rho_1,\rho_2$. Notice that if $\rho_{i} = \delta_{(x_i,y_i)}$, $i=1,2$, are Diracs, then $\rho^{\text{bary}} = \delta_{\left(\frac{x_1+x_2}{2},\frac{y_1+y_2}{2}\right)}$.} \cite{agueh2011barycenters} $\rho_{xy}^{\text{bary}}$ of the $(x,y)$ bivariate marginals for the pair of vehicles whose gap is under consideration, and then to compute $p_{\text{collision}}$ between this barycenter and the bivariate marginals for the vehicles in the front and back. Then, the safest placement option for the ego is the gap for which
\par\nobreak
{\small{\begin{align*}
\max\bigg\{p_{\text{collision}}\left(\!\rho_{xy}^{\text{bary}(\text{front},\text{back}\!)},\rho_{xy}^{\text{front}}\!\right), p_{\text{collision}}\!\left(\!\rho_{xy}^{\text{bary}(\text{front},\text{back})},\rho_{xy}^{\text{back}}\!\right)\bigg\}	
\end{align*}}}is minimum among all gaps with expected longitudinal separation $>2\ell$ at $t=t_{0}+T$.

The preceding calculation allows the ego to infer the safest gap in a neighboring lane in the event lane change over $[t_0,t_0+T]$ entails lower collision probability than continuing in its own lane. This inference is done by the ego at $t=t_0$. Let us denote the barycentric \emph{joint} state PDF corresponding to this gap at $t_0+T$ as $\rho_{T}^{\text{desired}}$. The reason for the barycentric joint to be a desired terminal PDF for ego is that it not only allows equi-separation in position coordinates, but also in velocity coordinates, from the vehicles at its front and back. Then it remains to design a feedback controller that will actually transfer the ego from the known initial joint PDF $\rho_0^{\text{ego}}$ to the computed $\rho_{T}^{\text{desired}}$ during $[t_0,t_0+T]$.

\subsection{PDF Control Layer}\label{SubsecControlLayer}
Having obtained the joint PDFs $\rho_{0}^{\text{ego}}$ and $\rho_{T}^{\text{desired}}$ from the prediction layer, the objective of the control layer is to synthesize state feedback $\bm{u}(\bm{x},t)$ that transfers the ego vehicle's \emph{controlled} joint state PDF $\rho^{\bm{u}}(\bm{x},t)$ from $\rho_{0}^{\text{ego}}$ to $\rho_{T}^{\text{desired}}$ over $t\in[t_0,t_0+T]$ subject to the trajectory-level dynamics (\ref{KinemaicBicycleModelSimplified}) while minimizing some ensemble-level stage cost. A natural choice in our context is to minimize
\begin{align}
\mathbb{E}_{\rho^{\bm{u}}}\left[\int_{t_0}^{t_0+T}\frac{1}{2}\left(a^2 + \left(v\dot{\theta}\right)^{2}\right)\:\differential t\right],
\label{OriginalLagrangian}	
\end{align}
i.e., to minimize the average acceleration effort. In (\ref{KinemaicBicycleModelSimplified}), $a$ is the longitudinal acceleration, and $v\dot{\theta}$ is the lateral acceleration under the small sideslip angle assumption.

Using (\ref{NewControlFromOldControl}), the corresponding stochastic optimal control problem in the feedback linearized coordinates becomes
\begin{subequations}
\begin{align}
\underset{\left(\sigma^{\widetilde{\bm{u}}},\widetilde{\bm{u}}\right)}{\inf}\quad &\mathbb{E}_{\sigma^{\widetilde{\bm{u}}}}\left[\int_{t_0}^{t_0+T} \frac{1}{2}\|\widetilde{\bm{u}}\|_{2}^{2}\:\differential t\right]\label{SOCPfeedbacklinearizedObj}\\
\text{subject to}\quad &\frac{\partial\sigma^{\widetilde{\bm{u}}}}{\partial t} + \nabla_{\bm{z}}\cdot\left(\left(\bm{A z} + \bm{B}\widetilde{\bm{u}}\right)\sigma^{\widetilde{\bm{u}}}\right) = 0, \label{SOCPfeedbacklinearizedConstrLiouville}\\
\quad &\sigma^{\widetilde{\bm{u}}}\left(\bm{z},t=0\right) = \bm{\tau}_{\sharp}\rho_{0}^{\text{ego}}, \label{SOCPfeedbacklinearizedConstrInitialPDF}\\
\quad &\sigma^{\widetilde{\bm{u}}}\left(\bm{z},t=T\right) = \bm{\tau}_{\sharp}\rho_{T}^{\text{desired}}.\label{SOCPfeedbacklinearizedConstrTerminalPDF}
\end{align}	
\label{SOCPfeedbacklinearized}	
\end{subequations}
In (\ref{SOCPfeedbacklinearized}), the decision variables are the controlled joint PDF $\sigma^{\widetilde{\bm{u}}}(\bm{z},t)$ and the state feedback $\widetilde{\bm{u}}(\bm{z},t)$. The controlled Liouville PDE (\ref{SOCPfeedbacklinearizedConstrLiouville}) corresponds to the dynamics (\ref{BrunovskyForm4State2Control}). Because $(\bm{A},\bm{B})$ is a controllable pair, problem (\ref{SOCPfeedbacklinearized}) is feasible. Also, since the diffeomorphism $\bm{\tau}$ is known, determining the optimal pair $\left(\sigma^{\widetilde{\bm{u}}}(\bm{z},t),\widetilde{\bm{u}}(\bm{z},t)\right)$ in (\ref{SOCPfeedbacklinearized}) is equivalent to determining the optimal pair $\left(\rho^{\bm{u}}(\bm{x},t),\bm{u}(\bm{x},t)\right)$ in (\ref{OriginalLagrangian}).

In (\ref{SOCPfeedbacklinearizedConstrInitialPDF})-(\ref{SOCPfeedbacklinearizedConstrTerminalPDF}), the pushforward of a given joint PDF $\xi$ via $\bm{\tau}$, i.e., $\zeta := \bm{\tau}_{\sharp}\xi$ can be written explicitly using (\ref{DetJac}) as
\[\zeta(\bm{z}) = \xi\left(z_{1},z_{3},\atantwo(z_4,z_2),\sqrt{z_{2}^{2} + z_{4}^{2}}\right)/\sqrt{z_{2}^{2} + z_{4}^{2}}.\]

Given $t_0,T,\bm{A},\bm{B},\bm{\tau},\rho_{0}^{\text{ego}},\rho_{T}^{\text{desired}}$, problem (\ref{SOCPfeedbacklinearized}) is an instance of the optimal mass transport problem \cite[Sec. 8.1]{villani2003topics} with linear time invariant (LTI) prior dynamics, which admits unique solution provided the pushforwards in the RHS of (\ref{SOCPfeedbacklinearizedConstrInitialPDF})-(\ref{SOCPfeedbacklinearizedConstrTerminalPDF}) have finite second moments \cite[p. 168]{ambrosio2008gradient}. The latter holds in our numerical computation with weighted point clouds having finite supports.

\subsubsection{Stochastic regularization}\label{subsubsecStocRegularization}
To solve (\ref{SOCPfeedbacklinearized}), we fix an $\varepsilon>0$, and replace the RHS of (\ref{SOCPfeedbacklinearizedConstrLiouville}) by $\varepsilon\langle\bm{B}\bm{B}^{\top},{\rm{Hess}}(\sigma^{\widetilde{\bm{u}}})\rangle$, where ${\rm{Hess}}(\sigma^{\widetilde{\bm{u}}})$ denotes the Hessian of $\sigma^{\widetilde{\bm{u}}}$. We can interpret this in two different ways. If the model (\ref{BrunovskyForm4State2Control}) is imperfect in that the noise in actuation was not accounted for, then a natural modification of (\ref{BrunovskyForm4State2Control}) capturing such unmodeled dynamics is
\begin{align}
\differential\bm{z} = \left(\bm{A z} + \bm{B}\widetilde{\bm{u}}\right)\:\differential t + \sqrt{2\varepsilon}\bm{B}\:\differential\bm{w},
\label{ItoSDE}	
\end{align}
where $\bm{w}$ is standard vector Wiener process. In this sense, the parameter $\varepsilon$ models the strength of the actuation noise, and need not be small. On the other hand, if the model (\ref{BrunovskyForm4State2Control}), or equivalently model (\ref{KinemaicBicycleModelSimplified}), is perfect, then one can think of $\varepsilon$ in (\ref{ItoSDE}) as a dynamic stochastic regularization for the computational purpose, and should set a small positive value for it. In either interpretation, the joint PDF dynamics associated with (\ref{ItoSDE}) is precisely
\begin{align}
\frac{\partial\sigma^{\widetilde{\bm{u}}}}{\partial t} + \nabla_{\bm{z}}\cdot\left(\left(\bm{A z} + \bm{B}\widetilde{\bm{u}}\right)\sigma^{\widetilde{\bm{u}}}\right) = \varepsilon\big\langle\bm{B}\bm{B}^{\top},{\rm{Hess}}(\sigma^{\widetilde{\bm{u}}})\big\rangle.
\label{FPKcontrolled} 	
\end{align}
The problem (\ref{SOCPfeedbacklinearized}) with (\ref{SOCPfeedbacklinearizedConstrLiouville}) replaced by (\ref{FPKcontrolled}) amounts to the so-called Schr\"{o}dinger bridge problem subject to LTI prior dynamics which admits unique solution under the foregoing assumption that (\ref{SOCPfeedbacklinearizedConstrInitialPDF})-(\ref{SOCPfeedbacklinearizedConstrTerminalPDF}) have finite second moments. Let us denote the optimizer of (\ref{SOCPfeedbacklinearizedObj}), (\ref{FPKcontrolled}), (\ref{SOCPfeedbacklinearizedConstrInitialPDF}), (\ref{SOCPfeedbacklinearizedConstrTerminalPDF}), as the pair $\left(\sigma^{\widetilde{\bm{u}}}_{\varepsilon},\widetilde{\bm{u}}_{\varepsilon}\right)_{\text{opt}}$ that depends on the choice of the regularization parameter $\varepsilon > 0$. In the limit $\varepsilon \downarrow 0$, it is known \cite{chen2016optimal} that $\left(\sigma^{\widetilde{\bm{u}}}_{\varepsilon},\widetilde{\bm{u}}_{\varepsilon}\right)_{\text{opt}} \rightarrow \left(\sigma^{\widetilde{\bm{u}}},\widetilde{\bm{u}}\right)_{\text{opt}}$, the optimizer of (\ref{SOCPfeedbacklinearized}).

The regularized problem (\ref{SOCPfeedbacklinearizedObj}), (\ref{FPKcontrolled}), (\ref{SOCPfeedbacklinearizedConstrInitialPDF}), (\ref{SOCPfeedbacklinearizedConstrTerminalPDF}) allows \cite[Sec. 4]{chen2016optimal} computing its solution as
\begin{align}
\left(\sigma^{\widetilde{\bm{u}}}_{\varepsilon},\widetilde{\bm{u}}_{\varepsilon}\right)_{\text{opt}} = \left(\widehat{\varphi}(\bm{z},t)\varphi(\bm{z},t),\:2\varepsilon\bm{B}^{\top}\nabla_{\bm{z}}\varphi(\bm{z},t)\right).
\label{OptimalSolutionViaSchrodingerFactors}	
\end{align}
In (\ref{OptimalSolutionViaSchrodingerFactors}), the pair $(\widehat{\varphi},\varphi)$ is given by
\begin{subequations}
\begin{align}
\widehat{\varphi}(\bm{z},t) &=  \int_{\mathcal{Z}_{0}} \kappa(t_{0},\widetilde{\bm{z}},t,\bm{z})\:\widehat{\varphi}_{0}\left(\bm{z}_{0}\right)\differential\bm{z}_{0},\label{phihatattimet}\\
\varphi(\bm{z},t) &= \int_{\mathcal{Z}_{T}}\kappa(t,\bm{z},t_{0}+T,\bm{z}_{T})\:\varphi_{T}\left(\bm{z}_{T}\right)\differential\bm{z}_{T},\label{phiattimet}
\end{align}
\label{TransientFactors}		
\end{subequations}
\noindent where $\widehat{\varphi}_{0}(\cdot):=\widehat{\varphi}(\cdot,t_{0})$, $\varphi_{T}(\cdot):=\varphi(\cdot,t_{0}+T)$, and $\mathcal{Z}_{0},\mathcal{Z}_{T}$ denote the support of the joint PDFs $\bm{\tau}_{\sharp}\rho_{0}^{\text{ego}}$, $\bm{\tau}_{\sharp}\rho_{T}^{\text{desired}}$, respectively. The Markov kernel\footnote{Since $(\bm{A},\bm{B})$ is controllable, $\kappa$ is positive everywhere.} $\kappa$ in (\ref{TransientFactors}) for $t_{0} \leq s < t \leq t_0+T$ is
\begin{align}
\kappa(s,\bm{z},t,\widetilde{\bm{z}}) := &\frac{\det\left(\bm{M}_{ts}\right)^{-1/2}}{(4\pi\varepsilon)^{n/2}}\exp\left(-\frac{1}{4\varepsilon}\times\right. \nonumber\\
&\left.\left(\bm{z}-\bm{\Phi}_{ts}\widetilde{\bm{z}}\right)^{\top}\bm{M}_{ts}^{-1}\left(\bm{z}-\bm{\Phi}_{ts}\widetilde{\bm{z}}\right)\right),
\label{LTIkernel}	
\end{align}
where $\bm{\Phi}_{ts}$, $\bm{M}_{ts}$ are the state transition matrix and the controllability Gramian, respectively, associated with the pair $(\bm{A},\bm{B})$; see (\ref{STM}), (\ref{ControllabGramian}) in Appendix. We next discuss how to solve for the pair $(\widehat{\varphi}_0, \varphi_{T})$, which by (\ref{OptimalSolutionViaSchrodingerFactors}) and (\ref{TransientFactors}), furnishes the solution of the regularized problem (\ref{SOCPfeedbacklinearizedObj}), (\ref{FPKcontrolled}), (\ref{SOCPfeedbacklinearizedConstrInitialPDF}), (\ref{SOCPfeedbacklinearizedConstrTerminalPDF}).

\begin{figure}[t]
\centering
   \includegraphics[width=\linewidth]{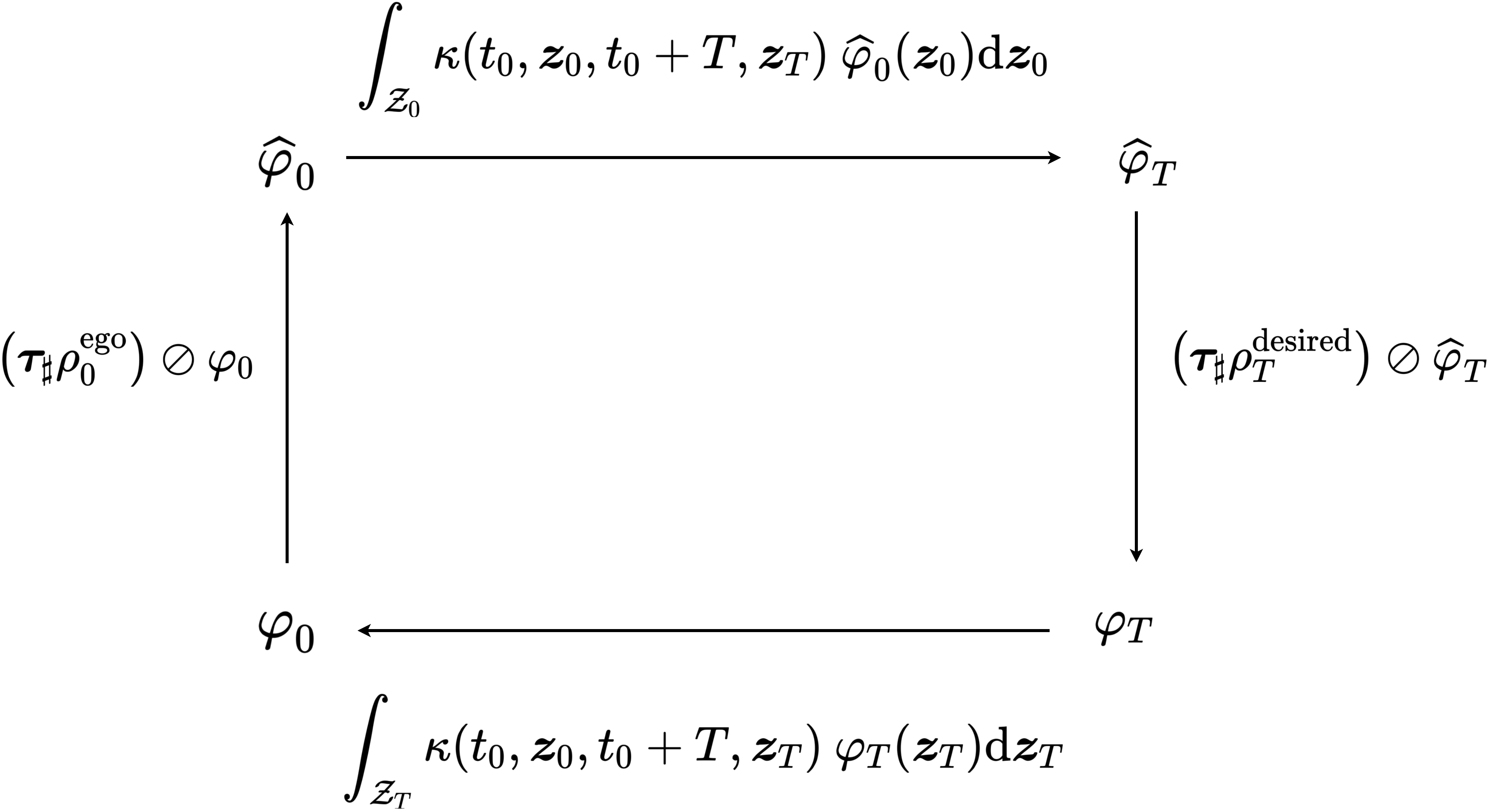}
   \caption{{\small{The fixed point recursion for the pair $\left(\widehat{\varphi}_{0},\varphi_{T}\right)$ using the Markov kernel (\ref{LTIkernel}) and the boundary conditions (\ref{FactorBoundaryconditions}).}}}
\vspace*{-0.1in}   
   \label{fig:FactorRecursion} 
\end{figure}

\subsubsection{Fixed point recursion}\label{subsubsecFixedPointRecursion}
From (\ref{SOCPfeedbacklinearizedConstrInitialPDF})-(\ref{SOCPfeedbacklinearizedConstrTerminalPDF}), the pair $(\widehat{\varphi},\varphi)$ in (\ref{OptimalSolutionViaSchrodingerFactors}) satisfies the boundary conditions
\begin{align}
\widehat{\varphi}_{0}\varphi_{0} = \bm{\tau}_{\sharp}\rho_{0}^{\text{ego}}, \quad \widehat{\varphi}_{T}\varphi_{T} = \bm{\tau}_{\sharp}\rho_{T}^{\text{desired}}.
\label{FactorBoundaryconditions}	
\end{align}
We then set up the fixed point recursion shown in Fig. \ref{fig:FactorRecursion} that is known \cite{chen2016entropic} to be contractive in Hilbert's projective metric, and thus converges to a unique pair $(\widehat{\varphi}_0, \varphi_{T})$, in worst-case linear rate. Having $(\widehat{\varphi}_0, \varphi_{T})$, we use (\ref{TransientFactors}) to compute $(\widehat{\varphi}(\bm{z},t), \varphi(\bm{z},t))$, and then (\ref{OptimalSolutionViaSchrodingerFactors}) to obtain $\left(\sigma^{\widetilde{\bm{u}}}_{\varepsilon},\widetilde{\bm{u}}_{\varepsilon}\right)_{\text{opt}}$. Using the map $\bm{\tau}^{-1}$ in Sec. \ref{SecModels}, we return to $\varepsilon$-regularized version of the original optimizers $(\rho^{\bm{u}},\bm{u})_{\text{opt}}$, denoted as $(\rho^{\bm{u}}_{\varepsilon},\bm{u}_{\varepsilon})_{\text{opt}}$.

We note here that the recursion in Fig. \ref{fig:FactorRecursion} and evaluating (\ref{TransientFactors}) require the Markov kernel (\ref{LTIkernel}), which in turn requires inverse of the \emph{finite horizon} controllability Gramian and its determinant. In general, this amounts to first solving the Lyapunov matrix ODE to compute the finite horizon Gramian itself, and then to compute its inverse and determinant. However, this computational burden can be significantly alleviated by exploiting the structure of the binary matrix pair $(\bm{A},\bm{B})$ in Brunovsky normal form, to analytically compute the inverse and its determinant (Theorem 1 in Appendix \ref{AppThm}).

\begin{figure}[t]
\centering
   \includegraphics[width=\linewidth]{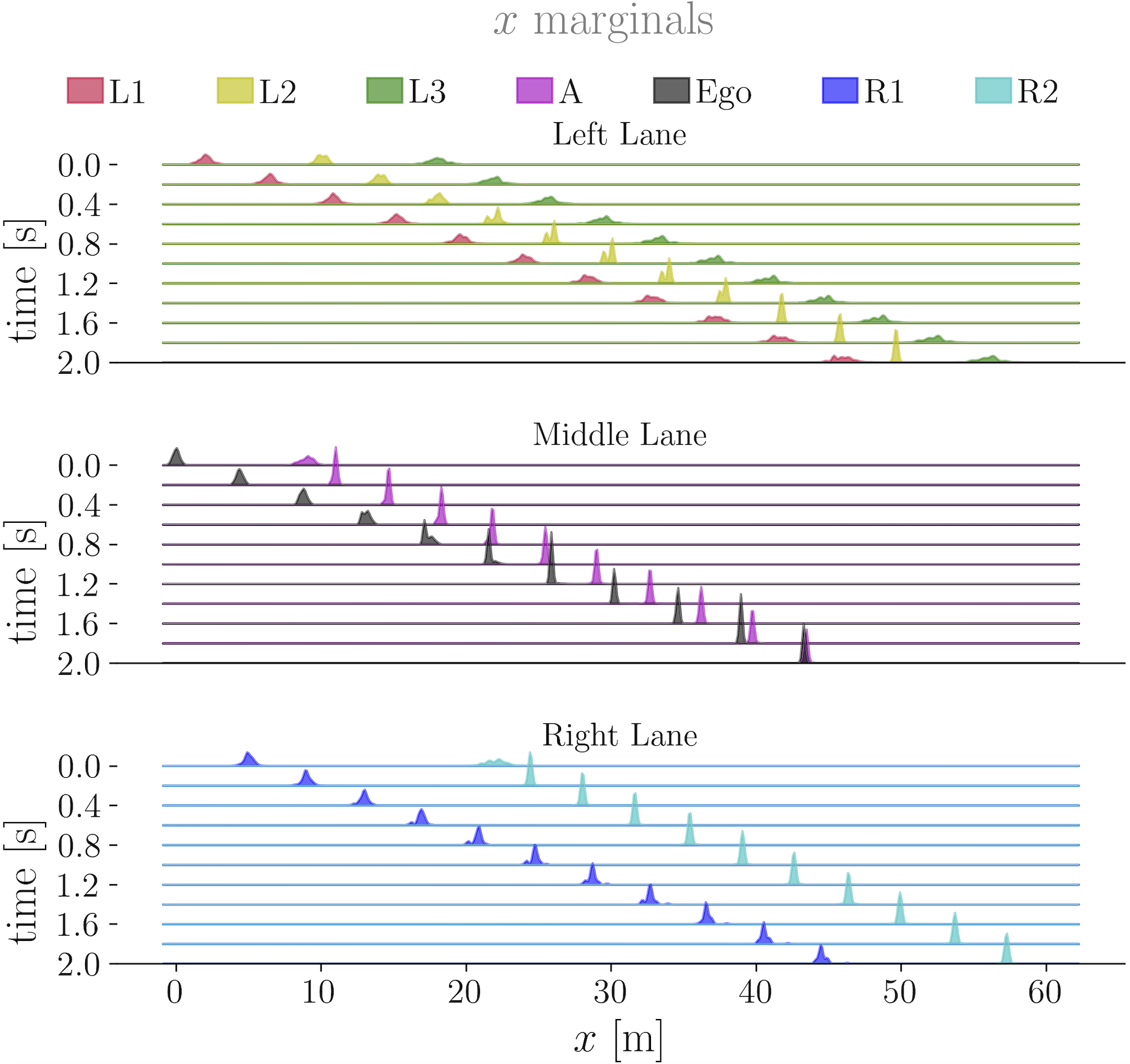}
   \caption{{\small{In a highway driving scenario akin to Fig. \ref{fig:Schematic} (see Sec. \ref{SecNumSim}), the ego vehicle's predictions made at $t=0$ for the $x$-marginal evolutions over time horizon $[0,2]$.}}}
\vspace*{-0.1in}   
   \label{fig:PredictedxMarginals} 
\end{figure}
\begin{figure}[t]
\centering
   \includegraphics[width=0.94\linewidth]{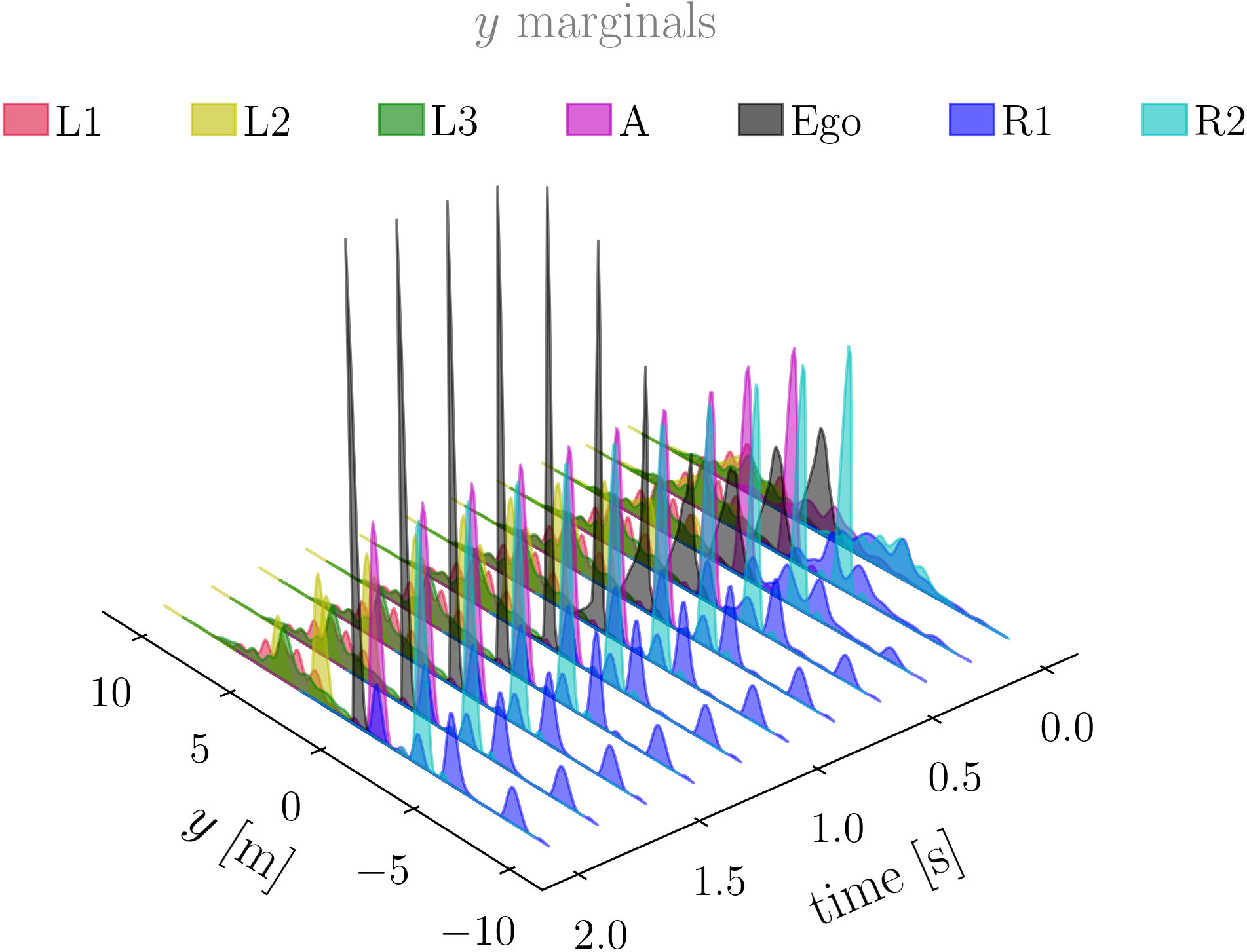}
   \caption{{\small{In a highway driving scenario akin to Fig. \ref{fig:Schematic} (see Sec. \ref{SecNumSim}), the ego vehicle's predictions made at $t=0$ for the $y$-marginal evolutions over time horizon $[0,2]$.}}}
\vspace*{-0.1in}   
   \label{fig:PredictedyMarginals}    
\end{figure}
\begin{figure}[htpb]
\centering
   \includegraphics[width=\linewidth]{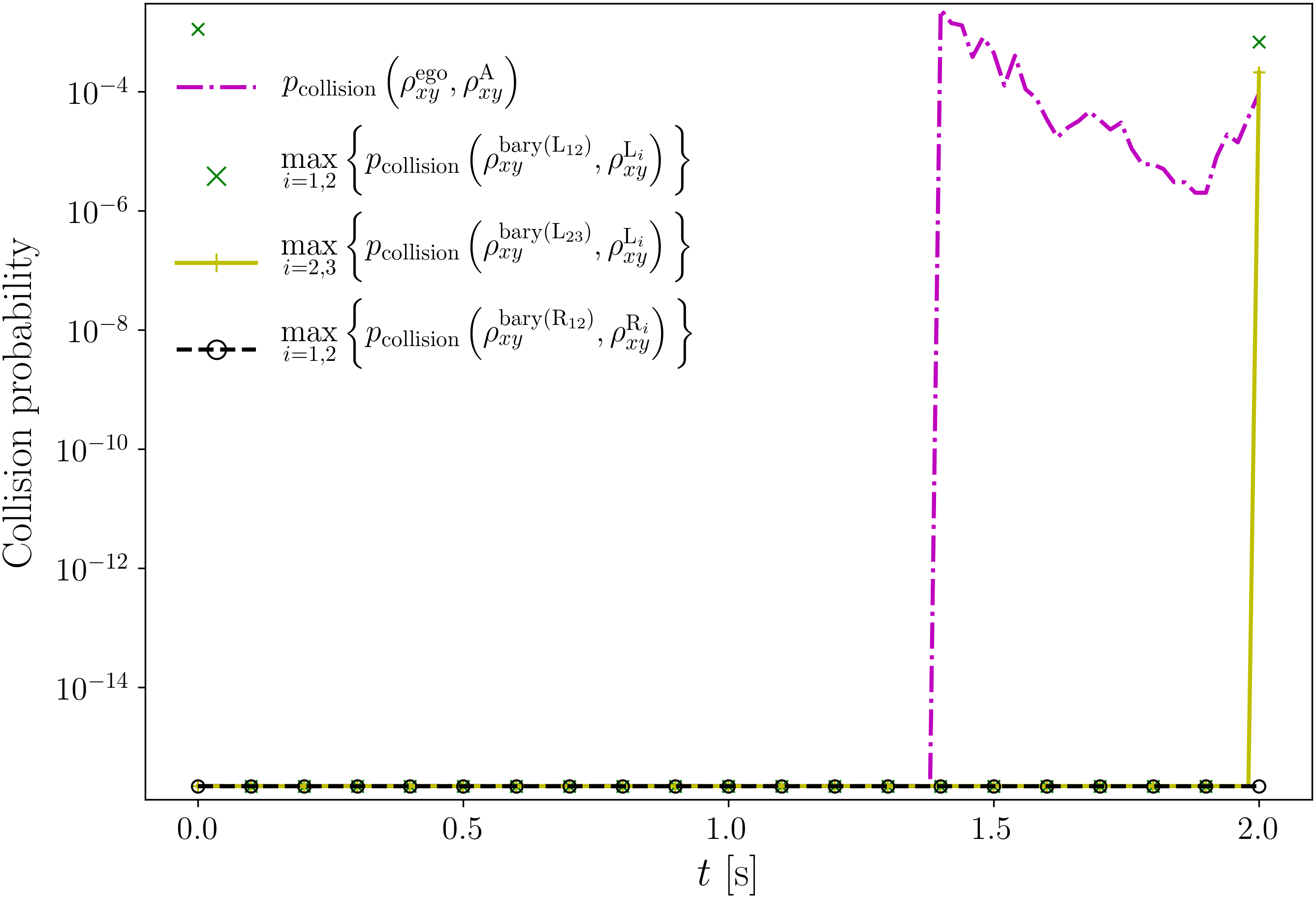}
   \caption{{\small{At $t=0$, the ego vehicle's forecasts of collision probabilities for the highway driving scenario in Sec. \ref{SecNumSim}.}}}
\vspace*{-0.1in}   
   \label{fig:PredictedCollisionProb}    
\end{figure}
\begin{figure}[htpb]
\centering
   \includegraphics[width=\linewidth,height=4cm]{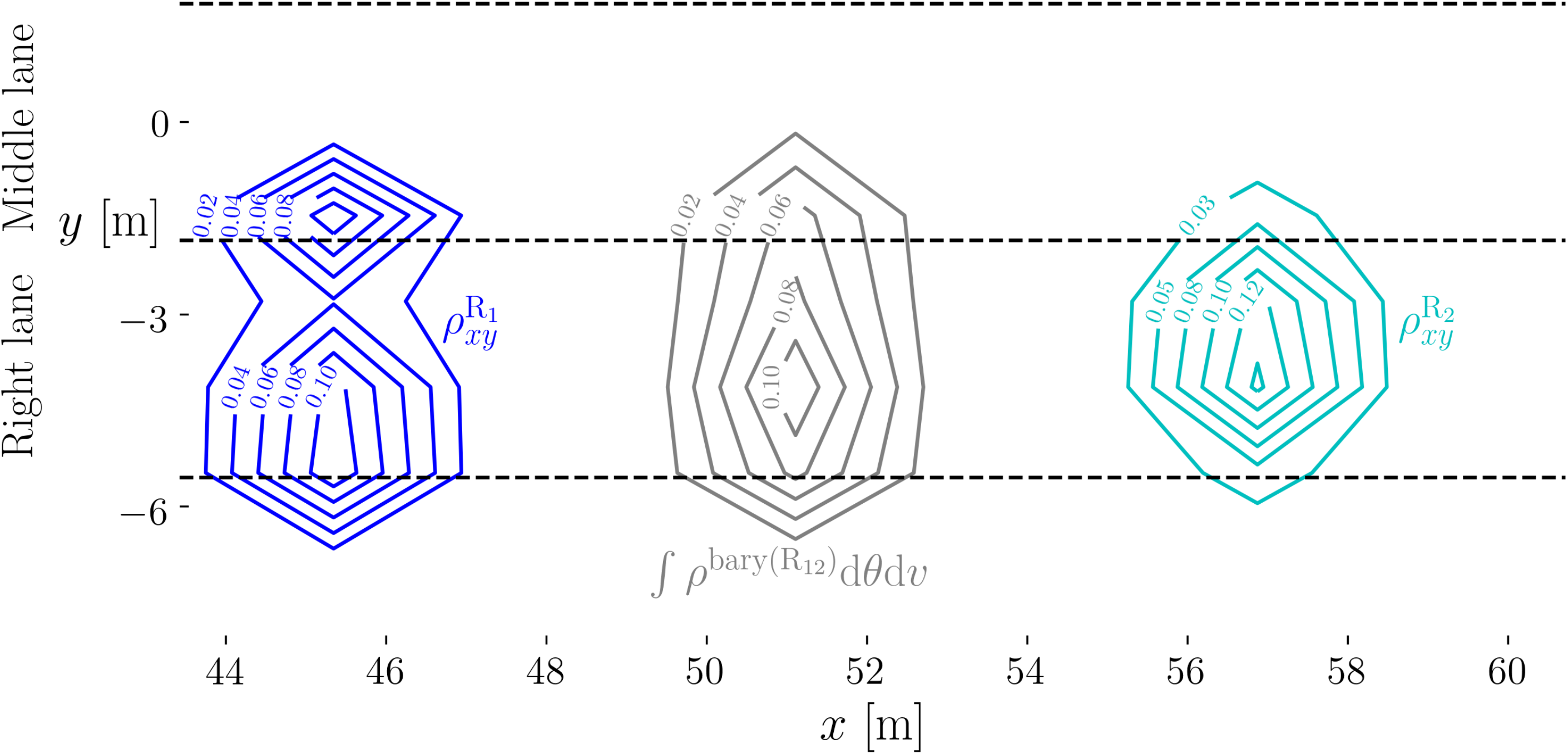}
   \caption{{\small{At $t=2$, the contour plots for the marginals $\rho_{xy}^{\text{R}_{i}}$, $i=1,2$, in the right lane, and the $(x,y)$ marginal of the barycenter of the joints of $R_{1}$ and $R_{2}$, i.e., $\int\rho^{\text{bary(R$_{12}$)}}\differential\theta\differential v$, for the highway driving scenario in Sec. \ref{SecNumSim}. The dashed lines denote the lane boundaries.}}}
\vspace*{-0.1in}   
   \label{fig:RightLaneMarginalOfDesiredBary}    
\end{figure}


\section{Numerical Simulations}\label{SecNumSim}
To illustrate the two layer framework proposed in Sec. \ref{SecFramework}, we consider a three-lane highway driving scenario as in Fig. \ref{fig:Schematic}, and suppose that the initial joint state PDFs for each of the seven cars: $\rho_{0}^{\text{ego}}, \rho_{0}^{\text{A}}, \rho_{0}^{\text{L}_{i}}, \rho_{0}^{\text{R}_{j}}$, $i\in\{1,2,3\}$, $j\in\{1,2\}$, are jointly Gaussian with parameters detailed in Appendix \ref{AppInitialJointPDF}. We take $N=200$ samples from each and evaluate the respective initial joint PDFs at those samples.

We set the parameter $\ell = 4$ m. The nominal MPC policies mentioned in Sec. \ref{SubsecPredictionLayer} are obtained by linearizing each vehicle's dynamics about the trim\footnote{Here, the trim is defined to be a straight line path in longitudinal direction with constant velocity starting from the respective mean initial conditions.}, the latter computed via \texttt{findop} in Simulink\textsuperscript{\textregistered} subject to the inequality constraints:
\begin{subequations}
\begin{align}
\begin{pmatrix}
-2\:\text{m/s$^2$}\\
-0.5\:\text{degrees}\
\end{pmatrix} \leq &\bm{u} \leq \begin{pmatrix}
2\:\text{m/s$^2$}\\
0.5\:\text{degrees}\
\end{pmatrix}, \label{ControlBounds}\\
\left(\mu_{{y}}\right)_{0} - 1 \leq & {y} \leq \left(\mu_{{y}}\right)_{0} + 1, \label{yBounds}\\
\left(\mu_{v}\right)_{0} - 1 \leq & v \leq \left(\mu_{v}\right)_{0} + 1. \label{vBounds}
\end{align}
\label{TrimConstraints}
\end{subequations}
The constraint (\ref{ControlBounds}) enforces bounded controls. The path constraint (\ref{yBounds}) enforces the lateral position $y$ to be within $\pm 1$ m of the initial mean lateral position $\left(\mu_{{y}}\right)_{0}$, while (\ref{vBounds}) enforces the velocity $v$ to be within $\pm 1$ m/s of the initial mean velocity $\left(\mu_{v}\right)_{0}$. Notice that the computation of the trim depends on the initial joint state PDFs.

For each vehicle, we linearize (\ref{KinemaicBicycleModelSimplified}) about the respective trim point $(\bm{x}_{\text{trim}},\bm{u}_{\text{trim}})$ to obtain the corresponding LTI matrix pair $\left(\bm{A}_{\text{trim}},\bm{B}_{\text{trim}}\right)$, which is used to compute the explicit MPC feedback $\bm{u} = \bm{\pi}_{\text{MPC}}(\bm{x},t)$ for this linearized system via the MPC toolbox \cite{bemporad2010model} minimizing the deviation from $(\bm{x}_{\text{trim}},\bm{u}_{\text{trim}})$ over a prediction horizon of length $t_p = 2$ s subject to (\ref{TrimConstraints}). The MPC objective is set to minimize a quadratic cost of the form \cite[eq. (11)]{haddad2020density} with the state, control and slew rate weights $\bm{Q} = 10\bm{I}_{4}, \bm{R}= \bm{I}_{2}, \bm{S} = 10^{-1}\bm{I}_{2}$, respectively, and with a sampling time $0.02$ s. This \emph{offline} nominal explicit MPC synthesis results in continuous piecewise affine feedback $\bm{\pi}_{\text{MPC}}(\bm{x},t) = \bm{\Gamma}_{q}\bm{x} + \bm{\gamma}_{q}$ for $\bm{x}(t)\in\mathcal{R}_{q}$, where $q=1,2,\hdots,\nu$, and $\sqcup_{q=1}^{\nu}\mathcal{R}_{q}$ is a disjoint polytopic partition of the of the reach set of the respective closed-loop constrained LTI system. The gain matrix-vector pairs $\{\bm{\Gamma}_{q},\bm{\gamma}_{q}\}_{q=1}^{\nu}$ and the polytopic partitions are stored for each vehicle. In our simulation with the seven vehicles, the number of partitions $\nu$ were 282--301.

\begin{figure}[t]
\centering
\includegraphics[width=\linewidth,height=3.5cm]{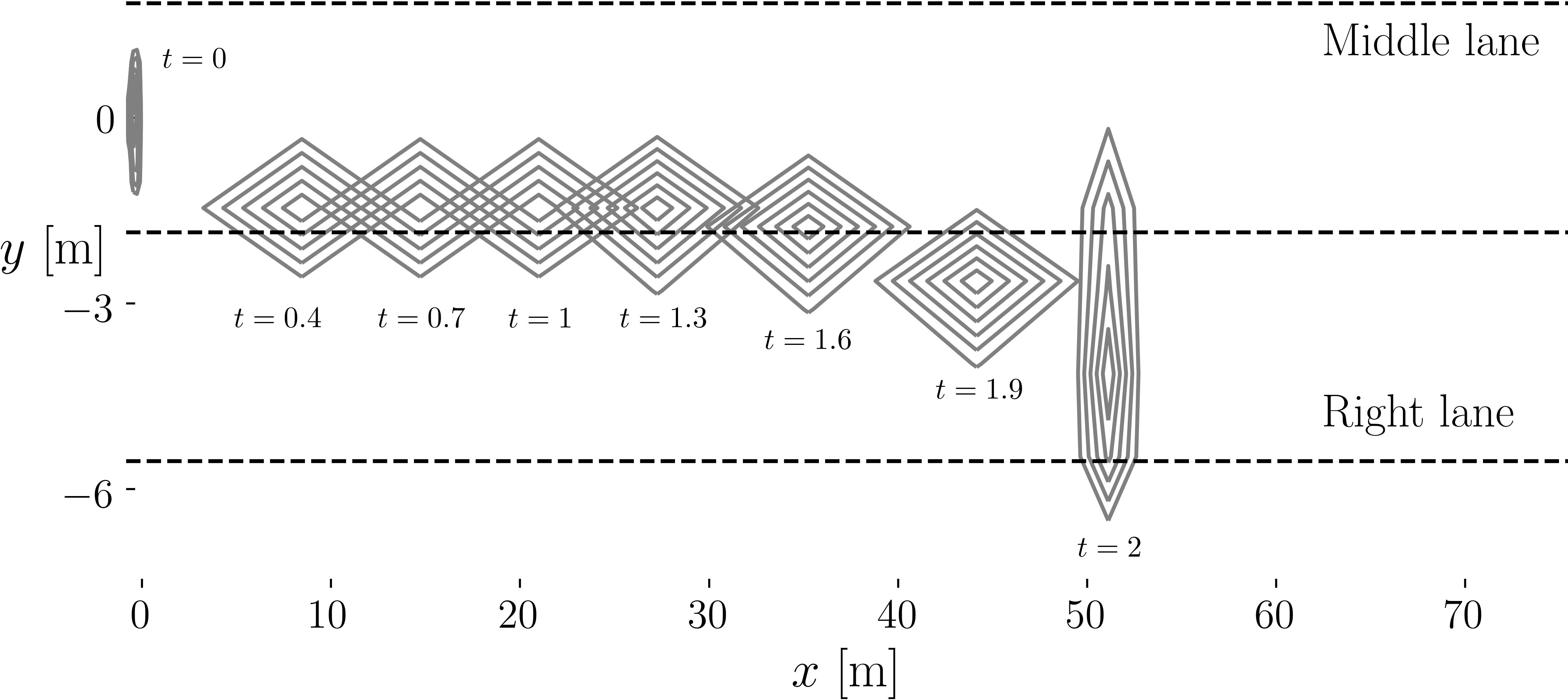}
 \caption{{\small{Bivariate $(x,y)$ marginals of the ego vehicle's optimally controlled joint PDFs from $\rho_{0}^{\text{ego}}(\bm{x})$ to $\rho_{T}^{\text{desired}}(\bm{x})$ over $t\in[0,2]$.}}}
\vspace*{-0.2in}   
   \label{fig:controlledxyMarginals}    
\end{figure}

Starting with the respective initial joint PDFs mentioned before, and using the framework in Sec. \ref{subsubsecPropagationPointCloud} with $N=200$ samples for each, the ego predicts the corresponding transient joint PDFs $\rho^{\text{ego}}(\bx,t), \rho^{\text{A}}(\bx,t), \rho^{\text{L}_{i}}(\bx,t), \rho^{\text{R}_{j}}(\bx,t)$ from $t_{0}=0$ to $t_{0}+T=2$ s, subject to (\ref{KinemaicBicycleModelSimplified}) with the respective nominal MPC policies in the loop. Fig. \ref{fig:PredictedxMarginals} and \ref{fig:PredictedyMarginals} show the corresponding univariate marginals in $x$ and $y$ positions, respectively. Notice from the middle lane plot in Fig. \ref{fig:PredictedxMarginals} that the longitudinal separation between the ego and the vehicle A decreases over time (due to larger initial mean velocity of the ego compared to A), indicating that continuation in the same lane, for the next 2 s, will be unsafe for the ego. The collision probability forecasts made by the ego at $t=0$, as described in Sec. \ref{subsubsecCollisionProbWassBary}, are shown in Fig. \ref{fig:PredictedCollisionProb}, which corroborates that for the ego to continue in the middle lane is unsafe. Fig. \ref{fig:PredictedCollisionProb} also predicts that the safest option for the ego is to initiate a lane change to its right at $t=0$, so as to situate itself between R\textsubscript{1} and R\textsubscript{2} by $t=2$, specifically at the barycenter of $\rho^{\text{R}_{1}}(\bx,t=2)$ and $\rho^{\text{R}_{2}}(\bx,t=2)$, denoted as $\rho_{2}^{\text{desired}}(\bm{x})$. The $(x,y)$ bivariate marginal of this desired terminal barycentric joint PDF is shown in Fig. \ref{fig:RightLaneMarginalOfDesiredBary}. For the computation of the Wasserstein barycentric PDF, we used the multi-marginal Sinkhorn algorithm from \cite[Sec. 4.2]{benamou2015iterative}.

The transfer of the ego vehicle's joint PDF, from $\rho_{0}^{\text{ego}}(\bm{x})$ to $\rho_{2}^{\text{desired}}(\bm{x})$ over $t\in[0,2]$, is then accomplished in real time via $\widetilde{\bm{u}}_{\varepsilon}(\bm{z},t)$ synthesized in feedback linearized coordinates, as detailed in Sec. \ref{subsubsecStocRegularization}. Recall from Sec. \ref{subsubsecFixedPointRecursion} that this feedback synthesis, in turn, reduces to solving a fixed point recursion outlined in Fig. \ref{fig:FactorRecursion}, which uses Theorem 1 in Appendix for computing the Markov kernel (\ref{LTIkernel}). Fig. \ref{fig:controlledxyMarginals} shows the optimally controlled $(x,y)$ bivariate marginals $\int(\rho^{\bm{u}}_{\varepsilon})_{\text{opt}}\differential\theta \differential v$ illustrating the finite horizon density regulation. Some details on the computation related to the feedback synthesis is given in Appendix \ref{AppSBPdetails}.   


\section{Concluding Remarks}\label{SecConclusions}
In this paper, we have proposed a computational framework to predict and control the transient joint PDFs of the stochastic states of the vehicles for safe automated driving. Direct nonparametric prediction and control of the joint PDFs allow stochastic robustness in real-time decision making. Although we focused on multi-lane highway driving scenarios for specificity, the proposed framework can be adapted to other scenarios of practical interest, such as decision making in signalized traffic intersections. These extensions will comprise our future work.


\appendix

\subsection{On the Finite Horizon Controllability Gramian for the Brunovsky Normal Form}\label{AppThm}
The purpose of this Appendix is to derive some results on the finite horizon controllability Gramian for the Brunovsky normal form, to help speed up the computation in Sec. \ref{SubsecControlLayer}.

Suppose a nonlinear control system with $n$ states and $m$ controls has vector relative degree $\bm{\pi}=(\pi_{1}, \pi_{2}, ..., \pi_{m})^{\top}$ with $\pi_{1}+\pi_2 + ... + \pi_{m} = n$, and that it can be put in the Brunovsky normal form
\begin{align}
\!\dot{\bm{z}} = \underbrace{\blkdiag\!\left(\bm{A}^{(1)}, ..., \bm{A}^{(m)}\right)}_{=:\bm{A}}\!\bm{z} + \underbrace{\blkdiag\!\left(\bm{e}_{\pi_{1}}^{\pi_{1}}, ...,\bm{e}_{\pi_{m}}^{\pi_{m}}\right)}_{=:\bm{B}}\!\widetilde{\bm{u}},	
\label{GeneralBrunovsky}
\end{align}
where $\bm{z} := \left(\bm{z}^{(1)}, \hdots, \bm{z}^{(m)}\right)^{\!\top}\!\in\mathbb{R}^{n}$, the subvector $\bm{z}^{(k)}\in\mathbb{R}^{\pi_{k}}$ for $k\in\{1,...,m\}$. The state matrix $\bm{A}$ in (\ref{GeneralBrunovsky}) comprises of the diagonal blocks 
\[\bm{A}^{(k)} := \left[\bm{0} \mid \bm{e}_{1}^{\pi_{k}} \mid \bm{e}_{2}^{\pi_{k}} \mid ... \mid \bm{e}_{\pi_{k}-1}^{\pi_{k}}\right] \quad\text{for}\quad k\in\{1,...,m\}.\]
Notice that (\ref{BrunovskyForm4State2Control}) is an instance of (\ref{GeneralBrunovsky}) with $n=4$, $m=2$, $\pi_1 = \pi_2 = 2$.

Let $t_{0}\leq s < t \leq t_{0}+T$. The state transition matrix for (\ref{GeneralBrunovsky}) in time horizon $[s,t]$ is
\begin{align}
&\!\!\bm{\Phi}_{ts} :=\bm{\Phi}(t,s) = \exp(\bm{A}(t-s)) \nonumber\\
&\!\!= \blkdiag\left(\exp(\bm{A}^{(1)}(t-s)), ..., \exp(\bm{A}^{(m)}(t-s))\right),
\label{STM}	
\end{align}
wherein for each $k\in\{1,...,m\}$, we have \cite[Appendix A]{haddad2020convex}
\begin{align}
\exp(\bm{A}^{(k)}(t-s)) = \begin{cases}
 \frac{(t-s)^{j-i}}{(j-i)!} & \text{for}\quad i<j,\\
 1 & \text{for}\quad i=j,\\
 0 & \text{for}\quad i>j,	
 \end{cases}
\label{STMblock} 	
\end{align}
for $i,j = 1, ..., \pi_{k}$. 

The controllability Gramian in time horizon $[s,t]$ is
\begin{align}
\bm{M}_{ts} := \bm{M}(t,s) = \int_{s}^{t}\bm{\Phi}_{t\tau}\bm{B}\bm{B}^{\top}\bm{\Phi}_{t\tau}^{\top}\:\differential\tau.
\label{ControllabGramian}	
\end{align}
Since the pair $(\bm{A},\bm{B})$ in (\ref{GeneralBrunovsky}) is controllable, $\bm{M}_{ts}$ is strictly positive definite. Changing the integration variable in (\ref{ControllabGramian}) from $\tau$ to $t-\tau$, a direct computation using (\ref{STM})-(\ref{STMblock}) yields 
\begin{align}
\bm{M}_{ts} = \blkdiag\left(\bm{M}_{ts}^{(1)}, ..., \bm{M}_{ts}^{(m)}\right),
\label{ControllabGramianBlockDiag}	
\end{align}
whose $k$th diagonal block, for $k\in\{1,...,m\}$, is given by
\begin{align}
\left[\bm{M}_{ts}^{(k)}\right]_{ij} = \frac{(t-s)^{2\pi_{k} - i - j + 1}}{(\pi_{k} - i)!(\pi_{k} - j)!(2\pi_{k}-i-j+1)}
\label{ControllabGramianBlockElementwise}	
\end{align}
for $i,j=1,...,\pi_{k}$. 

Looking at (\ref{ControllabGramianBlockDiag}) and (\ref{ControllabGramianBlockElementwise}), it is far from obvious how to get analytical handle on the determinant and inverse of $\bm{M}_{ts}$, which can speed up the evaluation of (\ref{LTIkernel}) in Fig. \ref{fig:FactorRecursion} and in (\ref{TransientFactors}). Theorem \ref{ThmDetInverseOfControllabGramian} gives explicit formula for the same.

\begin{theorem}\label{ThmDetInverseOfControllabGramian}
(\textbf{Explicit formula for the determinant and inverse of $\textbf{M}_{ts}$}) Let $\bm{M}_{ts}$ be the controllability Gramian (\ref{ControllabGramian}) for the LTI system (\ref{GeneralBrunovsky}) over time horizon $[s,t]$. Then \\
(i) $\det\left(\bm{M}_{ts}\right) = \prod_{k=1}^{m} \big\{ (t-s)^{\pi_{k}^2} \prod_{r=1}^{\pi_{k}} \frac{\Gamma(r)}{\Gamma(\pi_{k}+r)} \big\}$.\\
(ii) $\bm{M}_{ts}^{-1} = \!\blkdiag\left(\left(\bm{M}_{ts}^{(1)}\right)^{-1}, ..., \left(\bm{M}_{ts}^{(m)}\right)^{-1}\right)$, where for $k\in\{1,...,m\}$ and $i,j = 1,...,\pi_{k}$,
\begin{align*}
\!\left[\left(\bm{M}_{ts}^{(k)}\right)^{-1}\right]_{ij} &= \frac{(\pi_{k}-i)! (\pi_{k}-j)!}{(2\pi_{k} - i - j + 1)(t-s)^{2\pi_{k}-i-j+1}}\:\times\\
&\frac{\prod_{r=1}^{\pi_{k}}(2\pi_{k} - i - r + 1)(2\pi_{k} - j - r + 1)}{\left(\displaystyle\prod_{\stackrel{r=1}{r\neq i}}^{\pi_{k}}(r-i)\right) \left(\displaystyle\prod_{\stackrel{r=1}{r\neq j}}^{\pi_{k}}(r-j)\right)}.	
\end{align*}
\end{theorem}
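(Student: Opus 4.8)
The plan is to leverage the block-diagonal structure already recorded in (\ref{ControllabGramianBlockDiag})--(\ref{ControllabGramianBlockElementwise}): since $\bm{M}_{ts}=\blkdiag(\bm{M}_{ts}^{(1)},\dots,\bm{M}_{ts}^{(m)})$, its determinant is the product of the block determinants and its inverse is the block-diagonal assembly of the block inverses. Hence it suffices to analyze a single $\pi_{k}\times\pi_{k}$ block $\bm{M}_{ts}^{(k)}$ with entries (\ref{ControllabGramianBlockElementwise}); for brevity write $p:=\pi_{k}$ and $h:=t-s$.

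The crucial step is to recognize each block as a diagonal congruence of a \emph{Cauchy} matrix. Pulling $h^{p-i}/(p-i)!$ out of row $i$ and $h^{p-j+1}/(p-j)!$ out of column $j$ of (\ref{ControllabGramianBlockElementwise}) yields $\bm{M}_{ts}^{(k)}=\bm{D}_{1}\bm{C}\bm{D}_{2}$, where $\bm{D}_{1}:=\diag\left(h^{p-i}/(p-i)!\right)_{i=1}^{p}$, $\bm{D}_{2}:=\diag\left(h^{p-j+1}/(p-j)!\right)_{j=1}^{p}$, and $\bm{C}$ is the Cauchy matrix $[\bm{C}]_{ij}=1/(x_{i}+y_{j})$ with node sequences $x_{i}:=p-i$, $y_{j}:=p-j+1$. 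The $x_{i}$ are pairwise distinct, the $y_{j}$ are pairwise distinct, and $x_{i}+y_{j}=2p-i-j+1\in\{1,\dots,2p-1\}$ never vanishes, so $\bm{C}$ is a genuine, invertible Cauchy matrix. Therefore $\det\bm{M}_{ts}^{(k)}=\det\bm{D}_{1}\,\det\bm{C}\,\det\bm{D}_{2}$ and $(\bm{M}_{ts}^{(k)})^{-1}=\bm{D}_{2}^{-1}\bm{C}^{-1}\bm{D}_{1}^{-1}$, which reduces both claims to the classical closed forms for the Cauchy determinant and the Cauchy inverse.

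Part (ii) then follows by direct substitution. The Cauchy inverse formula gives $[\bm{C}^{-1}]_{ij}=\frac{\left(\prod_{k=1}^{p}(x_{k}+y_{i})\right)\left(\prod_{k=1}^{p}(x_{j}+y_{k})\right)}{(x_{j}+y_{i})\left(\prod_{k\ne j}(x_{j}-x_{k})\right)\left(\prod_{k\ne i}(y_{i}-y_{k})\right)}$; inserting $x_{i}=p-i$, $y_{j}=p-j+1$ turns $\prod_{k=1}^{p}(x_{k}+y_{i})$ into $\prod_{r=1}^{p}(2p-i-r+1)$, $\prod_{k=1}^{p}(x_{j}+y_{k})$ into $\prod_{r=1}^{p}(2p-j-r+1)$, $\prod_{k\ne j}(x_{j}-x_{k})$ into $\prod_{r=1,\,r\ne j}^{p}(r-j)$, $\prod_{k\ne i}(y_{i}-y_{k})$ into $\prod_{r=1,\,r\ne i}^{p}(r-i)$, and leaves $x_{j}+y_{i}=2p-i-j+1$. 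Conjugating by $\bm{D}_{1}^{-1}$ and $\bm{D}_{2}^{-1}$ multiplies this by $(p-i)!(p-j)!/h^{2p-i-j+1}$, producing exactly the stated expression with $p=\pi_{k}$, $h=t-s$.

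For part (i) I would use $\det\bm{C}=\frac{\prod_{i<j}(x_{i}-x_{j})(y_{i}-y_{j})}{\prod_{i,j}(x_{i}+y_{j})}$. Here $x_{i}-x_{j}=y_{i}-y_{j}=j-i$, so the numerator equals $\prod_{1\le i<j\le p}(j-i)^{2}=\left(\prod_{r=1}^{p-1}r!\right)^{2}$ by the superfactorial identity $\prod_{1\le i<j\le p}(j-i)=\prod_{r=1}^{p-1}r!$, and this cancels the factorial part of $\det\bm{D}_{1}\det\bm{D}_{2}=h^{p^{2}}/\left(\prod_{r=0}^{p-1}r!\right)^{2}$; the surviving denominator $\prod_{i,j}(2p-i-j+1)=\prod_{i=1}^{p}(2p-i)!/(p-i)!$ becomes $\prod_{r=1}^{p}\Gamma(p+r)/\Gamma(r)$ under the reindexing $r\leftrightarrow p-i+1$, so $\det\bm{M}_{ts}^{(k)}=h^{p^{2}}\prod_{r=1}^{p}\Gamma(r)/\Gamma(p+r)$, and the product over $k$ gives (i). The main obstacle throughout is the bookkeeping in part (i): correctly collecting the exponent of $t-s$ from the two congruence factors ($\bm{D}_{1}$ contributes $\sum_{i=1}^{p}(p-i)=p(p-1)/2$ and $\bm{D}_{2}$ contributes $\sum_{j=1}^{p}(p-j+1)=p(p+1)/2$, summing to $p^{2}$), verifying the superfactorial cancellation, and checking that $\prod_{i=1}^{p}(p-i)!/(2p-i)!$ coincides with $\prod_{r=1}^{\pi_{k}}\Gamma(r)/\Gamma(\pi_{k}+r)$; everything after the Cauchy reduction is mechanical.
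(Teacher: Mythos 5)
Your proposal is correct, and for part (i) it takes a genuinely different route from the paper. For part (ii) the two arguments essentially coincide: the paper also isolates a Cauchy structure, writing $\bm{M}_{ts}^{(k)} = \diag(\bm{\alpha}^{(k)})\widetilde{\bm{M}}^{(k)}\diag(\bm{\alpha}^{(k)})$ with $[\widetilde{\bm{M}}^{(k)}]_{ij} = (t-s)/(a_i+b_j)$, $a_i = \pi_k - i$, $b_j = \pi_k - j + 1$, and then invokes the classical Cauchy inverse (citing Knuth/Schechter); your asymmetric factorization $\bm{D}_1\bm{C}\bm{D}_2$ just distributes the powers of $t-s$ differently between the two diagonal factors, and your node choices and resulting index bookkeeping check out. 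The real divergence is in part (i): the paper rewrites each Gramian entry as an integral $\int_0^{t-s}\frac{\sigma^{\pi_k-i}}{(\pi_k-i)!}\frac{\sigma^{\pi_k-j}}{(\pi_k-j)!}\,\differential\sigma$, applies the Andr\'{e}ief identity to convert the block determinant into a multiple integral of a squared Vandermonde, and evaluates that via a special case of the Selberg integral, whereas you obtain the determinant purely algebraically from the Cauchy determinant formula applied to the same factorization used for the inverse. Your computation of the pieces is right: the $t-s$ exponent $p(p-1)/2 + p(p+1)/2 = p^2$, the superfactorial cancellation of $\prod_{i<j}(j-i)^2$ against $\bigl(\prod_{r=0}^{p-1}r!\bigr)^{-2}$, and the reindexing of $\prod_{i=1}^{p}(2p-i)!/(p-i)!$ into $\prod_{r=1}^{p}\Gamma(p+r)/\Gamma(r)$ all verify, yielding exactly the stated $\det(\bm{M}_{ts}^{(k)}) = (t-s)^{\pi_k^2}\prod_{r=1}^{\pi_k}\Gamma(r)/\Gamma(\pi_k+r)$. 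What your approach buys is economy and unity: a single elementary decomposition delivers both the determinant and the inverse with no multidimensional integrals or special-function identities beyond Gamma-factorial conversions. What the paper's route buys is a template that does not depend on recognizing a Cauchy kernel: the Andr\'{e}ief--Selberg machinery applies to any Gram matrix of integrable functions, so it would survive, for instance, weighted or non-monomial integrands where the entries no longer have the $1/(a_i+b_j)$ form.
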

\begin{proof}
(i) Without loss of generality, let $\pi_{k} \geq 2$ since for $\pi_{k}=1$, the matrix $\bm{M}_{ts}^{(k)}$ is a scalar (equal to its trivial determinant). We rewrite (\ref{ControllabGramianBlockElementwise}) as
\begin{align}
\left[\bm{M}_{ts}^{(k)}\right]_{ij} = \int_{0}^{t-s} \frac{\sigma^{\pi_{k}-i}}{(\pi_{k}-i)!}\frac{\sigma^{\pi_{k}-j}}{(\pi_{k}-j)!}	\differential\sigma.
\label{IntegralRepresentation}
\end{align}
Recall Andr\'{e}ief identity\footnote{This can be seen as the continuum version of the
Cauchy-Binet formula.} \cite{andreief1883note,forrester2019meet}, cf. \cite[part II, problem 68]{polya1998problems}: for two sequences of integrable functions $\{f_{i}(\sigma)\}_{i=1}^{\nu}$ and $\{g_{i}(\sigma)\}_{i=1}^{\nu}$, we have
{\small{\begin{align*}
&\int\!\!\hdots\!\int\!\!\det\left(f_{i}(\sigma_{j})\right)\det\left(g_{i}(\sigma_{j})\right)\differential \sigma_{1}\hdots\differential \sigma_{\nu} \\
&= \nu!\:\det\left(\int f_{i}(\sigma)g_{j}(\sigma)\differential\sigma\right),	
\end{align*}
which applied to (\ref{IntegralRepresentation}), yields
\begin{align}
&\det\!\left(\!\bm{M}_{ts}^{(k)}\!\right) \!=\! \frac{1}{\pi_{k}!}\int_{0}^{t-s}\!\!\!\!\!\hdots\int_{0}^{t-s}\!\!\left(\!\!\det\!\left(\frac{\sigma_{j}^{\pi_{k}-i}}{(\pi_{k}-i)!}\right)\!\!\right)^{\!\!2}\!\!\differential\sigma_1\hdots\differential\sigma_{\pi_{k}}\nonumber\\
&= \frac{1}{\pi_{k}!}\!\int_{0}^{t-s}\!\!\!\!\!\!\!\hdots\!\!\int_{0}^{t-s}\!\!\left(\!\frac{1}{\prod_{i=1}^{\pi_{k}}(\pi_{k}-i)!}\!\!\prod_{1 \leq i < j \leq \pi_{k}}\!\!\!\!(\sigma_{j}-\sigma_{i})\!\right)^{\!\!2}\!\!\differential\sigma_1\hdots\differential\sigma_{\pi_{k}}\nonumber\\
&= \underbrace{\frac{1}{\pi_{k}!\left(\prod_{i=1}^{\pi_{k}}(\pi_{k}-i)!\right)^{2}}}_{=:p_k}\underbrace{\int_{0}^{t-s}\!\!\!\!\!\!\!\hdots\!\!\int_{0}^{t-s}\!\!\!\!\!\!\prod_{1 \leq i < j \leq \pi_{k}}\!\!\!\!\!(\sigma_{j}-\sigma_{i})^{2}\differential\sigma_1\hdots\differential\sigma_{\pi_{k}}}_{=:I_k}.
\label{DetOfOneBlock}	
\end{align}}}
In the above, the first step used that for (\ref{IntegralRepresentation}), $f_{i}(\cdot)=g_{i}(\cdot) = (\cdot)^{\pi_{k}-i}/(\pi_{k}-i)!$. The second step utilized a Vandermonde-type determinant. The third step moved the square inside the product in integrand.

Using the change of variable $\varsigma_{r} := \sigma_{r}/(t-s)$ for $r=1,...,\pi_{k}$, the integral $I_k$ in (\ref{DetOfOneBlock}) reduces to a special case of the Selberg integral \cite{selberg1944berkninger,forrester2008importance}: 
\begin{align*}
I_k &= (t-s)^{\pi_{k} + \pi_{k}(\pi_{k}-1)}\!\!\int_{0}^{1}\!\!\!\!\!\hdots\!\!\int_{0}^{1}\!\!\prod_{1 \leq i < j \leq \pi_{k}}\!\!\!\!(\varsigma_{j}-\varsigma_{i})^{2}\:\differential\varsigma_1\hdots\differential\varsigma_{\pi_{k}}\\
&=  (t-s)^{\pi_{k}^2}\prod_{i=0}^{\pi_{k}-1}\frac{(\Gamma(i+1))^{2}\:\Gamma(i+2)}{\Gamma(\pi_{k}+i+1)}.	
\end{align*}
Expressing the pre-factor $p_k$ in (\ref{DetOfOneBlock}) in terms of the Gamma functions, we get 
\[p_k = \frac{1}{\Gamma(\pi_{k}+1)\prod_{r=1}^{\pi_{k}}(\Gamma(r))^2}.\]
Therefore, (\ref{DetOfOneBlock}) gives 
\[\det\!\left(\!\bm{M}_{ts}^{(k)}\!\right) = p_k I_k = (t-s)^{\pi_{k}^2}\prod_{r=1}^{\pi_{k}}\frac{\Gamma(r)}{\Gamma(\pi_{k}+r)}.\] 
From (\ref{ControllabGramianBlockDiag}), we also have $\det\left(\bm{M}_{ts}\right) =  \prod_{k=1}^{m}\det\left(\bm{M}_{ts}^{(k)}\right)$. Combining the last two statements, we obtain the result.

(ii) For $k\in\{1,...,m\}$, define $\bm{\alpha}^{(k)}\in\mathbb{R}^{\pi_{k}}$ with components $[\bm{\alpha}^{(k)}]_{i} := (t-s)^{\pi_{k}-i}/(\pi_{k}-i)!$. Notice from (\ref{ControllabGramianBlockElementwise}) that
\begin{align}
\bm{M}_{ts}^{(k)} = \diag(\bm{\alpha}^{(k)})\widetilde{\bm{M}}^{(k)} \diag(\bm{\alpha}^{(k)}),
\label{DiagScaling}	
\end{align}
where $\left[\widetilde{\bm{M}}^{(k)}\right]_{ij} := \frac{t-s}{2\pi_{k}-i-j+1}$ for all $i,j=1,...,\pi_{k}$. Hence 
\begin{align}
\left[\!\left(\bm{M}_{ts}^{(k)}\right)^{\!-1}\!\right]_{ij} = \frac{(\pi_{k}-i)!(\pi_{k}-j)!}{(t-s)^{2\pi_{k}-i-j}}\left[\!\left(\widetilde{\bm{M}}^{(k)}\right)^{\!-1}\!\right]_{ij}.
\label{InverseGeneralForm}	
\end{align}
That $\widetilde{\bm{M}}^{(k)}$ is nonsingular follows from (\ref{DiagScaling}) as $\bm{M}_{ts}^{(k)}$ is positive definite\footnote{This in turn follows from the positive definiteness of $\bm{M}_{ts}$ and (\ref{ControllabGramianBlockDiag}): a block diagonal matrix is positive definite iff its diagonal blocks are positive definite.}, and $\bm{\alpha}^{(k)}$ is element-wise positive. 

To determine the inverse of $\widetilde{\bm{M}}^{(k)}$, write it as a scaled Cauchy matrix: $\left[\widetilde{\bm{M}}^{(k)}\right]_{ij} = (t-s)/(a_i + b_j)$, where $a_{i} := \pi_{k}-i$, and $b_{j}:=\pi_{k}-j+1$. This enables computation of the inverse \cite[Sec. 1.2.3, Exercise 41]{knuth1997fundamental},\cite{schechter1959inversion}:
\begin{align*}
\left[\!\left(\widetilde{\bm{M}}^{(k)}\right)^{\!-1}\!\right]_{ij} &= \frac{1}{(t-s)(2\pi_{k} - i - j + 1)}\:\times\\
&\frac{\prod_{r=1}^{\pi_{k}}(2\pi_{k} - i - r + 1)(2\pi_{k} - j - r + 1)}{\left(\displaystyle\prod_{\stackrel{r=1}{r\neq i}}^{\pi_{k}}(r-i)\right) \left(\displaystyle\prod_{\stackrel{r=1}{r\neq j}}^{\pi_{k}}(r-j)\right)},	
\end{align*}
which together with (\ref{InverseGeneralForm}), completes the proof.
\end{proof}

To illustrate Theorem \ref{ThmDetInverseOfControllabGramian}, consider $\bm{\pi} = (3, 2)^{\top}$, $s=0$, $t=1$. Using the Theorem, we then have
\begin{align*}
\bm{M}_{10}^{-1} = \begin{pmatrix}
 720 & -360 & 60 & 0 & 0\\
 -360 & 192 & -36 & 0 & 0\\
 60 & -36 & 9 & 0 & 0\\
 0 & 0 & 0 & 12 & -6\\
 0 & 0 & 0 & -6 & 4	
 \end{pmatrix},	
\end{align*}
and $\det\left(\bm{M}_{10}^{-1}\right) = 103680$.

To get a sense of the numerical benefit offered by Theorem \ref{ThmDetInverseOfControllabGramian}, we compared the computational times for constructing the LTI Markov kernel (\ref{LTIkernel}) in two different ways. The first way was to use the Lyapunov matrix ODE via \texttt{ode45} in MATLAB to compute the finite horizon Gramian $\bm{M}_{ts}$, and then using that $\bm{M}_{ts}$ for constructing (\ref{LTIkernel}). The second way was to directly use the formulae of the inverse and determinant of $\bm{M}_{ts}$ from Theorem \ref{ThmDetInverseOfControllabGramian}, in constructing (\ref{LTIkernel}). For $s=0$, $t=1$, $\varepsilon=0.5$, the results are summarized in Table I wherein the state dimension for the first row is $n=4$, and the same for the second row is $n=5$. Both rows correspond to $m=2$ controls. In each case, the domain was $[-1,1]^{n}$ with 5 uniform discretization per dimension, i.e., $5^{n}$ samples. All simulations were done via MATLAB R2019b on iMac with 3.4 GHz Quad-Core Intel Core i5 processor and 8 GB memory.
\begin{table}[ht]\label{tab:computationaltimes}
  \centering
  \renewcommand{\arraystretch}{1.5}
\begin{tabular}{| c | c | c |}
   \hline
   \multirow{2}{3cm}{ Vector relative degree $\bm{\pi}$ } & \multicolumn{2}{ c |}{Computational time [s]} \\    \cline{2-3} & \;using Lyapunov ODE\; & \;using Theorem \ref{ThmDetInverseOfControllabGramian}\;\\ \hline\hline
 $(2,2)^{\top}$ & 1.9556 & 0.2995\\ \hline
  $(3,2)^{\top}$ & 49.7869 & 6.9294\\ \hline 
   \end{tabular}
     \caption{Computational times in evaluating (\ref{LTIkernel}).}
\vspace*{-0.15in}   	
   	\end{table}


\subsection{Initial Joint PDFs in Section \ref{SecNumSim}}\label{AppInitialJointPDF}
We suppose that the initial joint state PDFs for each of the seven cars: $\rho_{0}^{\text{ego}}, \rho_{0}^{\text{A}}, \rho_{0}^{\text{L}_{i}}, \rho_{0}^{\text{R}_{j}}$, $i\in\{1,2,3\}$, $j\in\{1,2\}$, are jointly Gaussian with respective initial mean vectors:
\par\nobreak
\vspace*{-0.15in}
{\small{\begin{align*}
&\bm{\mu}_{0}^{\text{L}_1} = \left(2, 3.7, 0, 22\right)^{\!\top}\!\!, \bm{\mu}_{0}^{\text{L}_2} = \left(10, 3.7, 0, 20\right)^{\!\top}\!\!, \bm{\mu}_{0}^{\text{L}_3} = \left(18, 3.7, 0, 19\right)^{\!\top}\!\!, \\
&\bm{\mu}_{0}^{\text{ego}} = \left(0, 0, 0, 22\right)^{\top}, \quad \bm{\mu}_{0}^{\text{A}} = \left(9, 0, 0, 18\right)^{\top}, \\
&\bm{\mu}_{0}^{\text{R}_1} = \left(5, -3.7, 0, 20\right)^{\top}, \quad \bm{\mu}_{0}^{\text{R}_2} = \left(22, -3.7, 0, 18\right)^{\top},
\end{align*}}}
and respective initial covariance matrices:
\begin{align*}
\bm{\Sigma}_{0}^{\text{L}_1} &= {\rm{diag}}\left(0.44, 4, 2.7\times10^{-6}, 0.16\right),\\
\bm{\Sigma}_{0}^{\text{L}_2} &= {\rm{diag}}\left(0.25, 7.1, 2.7\times10^{-6}, 0.11\right),\\
\bm{\Sigma}_{0}^{\text{L}_3} &= {\rm{diag}}\left(1, 7.1, 2.7\times10^{-6}, 0.16\right),\\
\bm{\Sigma}_{0}^{\text{ego}} &= {\rm{diag}}\left(0.11 ,0.44, 2.7\times10^{-6}, 0.03\right),\\
\bm{\Sigma}_{0}^{\text{A}} &= {\rm{diag}}\left(0.44, 7.1, 2.7\times10^{-6}, 0.13\right),\\
\bm{\Sigma}_{0}^{\text{R}_1} &= {\rm{diag}}\left(0.25, 7.1, 2.7\times10^{-6}, 0.11\right),\\
\bm{\Sigma}_{0}^{\text{R}_2} &= {\rm{diag}}\left(1, 5.4, 2.7\times10^{-6}, 0.11\right).
\end{align*}


\subsection{Simulation Details for Controller Synthesis in Section \ref{SecNumSim}}\label{AppSBPdetails}
We next give some details of the parameters used for the feedback synthesis steering the initial joint PDF $\rho_{0}^{\text{ego}}(\bm{x})$ to $\rho_{2}^{\text{desired}}(\bm{x})$ over $t\in[0,2]$. In our context, the desired terminal PDF $\rho_{2}^{\text{desired}}(\bm{x})$ was computed as the Wasserstein barycenter between the joint PDFs of R\textsubscript{1} and R\textsubscript{2} at $t=2$ with $\lambda_{1}=\lambda_2 = 0.5$. 

We used the stochastic regularization $\varepsilon = 0.1$. To implement the fixed point recursion mentioned in Section \ref{subsubsecFixedPointRecursion}, we used the closed form formulae derived in Theorem \ref{ThmDetInverseOfControllabGramian} in (\ref{LTIkernel}). To assess the convergence of the fixed point recursion for the pair $(\widehat{\varphi}_{0},\varphi_{T})$, we employed Hilbert's projective metric $d_{\text{Hilbert}}$ (see e.g., \cite{lemmens2014birkhoff}) between the previous and the current iterates. Recall that for a given pair of element-wise positive vectors $\bm{p},\bm{q}$ (i.e., $p_{i},q_{i}>0$ for all $i$), we have
\[d_{\text{Hilbert}}\left(\bm{p},\bm{q}\right) = \log\left(\dfrac{\underset{i}{\max}(\bm{p}\oslash\bm{q})}{\underset{i}{\min}(\bm{p}\oslash\bm{q})}\right).\]
\noindent While the number of iterations in the fixed point recursion was less than 1000, we checked if $d_{\text{Hilbert}}(\widehat{\varphi}_{0}^{\text{previous}},\widehat{\varphi}_{0}^{\text{current}})$ and $d_{\text{Hilbert}}(\varphi_{T}^{\text{previous}},\varphi_{T}^{\text{current}})$ were both less than a given numerical tolerance: $10^{-4}$. In our simulation, this fixed point recursion converged in 73 iterations. 

To avoid the loss of floating point precision, we performed the fixed point recursion shown in Fig. \ref{fig:FactorRecursion} in logarithmic domain using the standard log-sum-exp trick.

\bibliography{references.bib}
\bibliographystyle{IEEEtran}

\end{document}